\documentclass[aps,nopacs,nokeys,11pt,notitlepage,superscriptaddress]{revtex4-2}
\usepackage{amsmath,amsfonts,amssymb}
\usepackage{times}

\usepackage{mathtools}
\usepackage{multirow}
\usepackage{ulem}
\usepackage{algorithmic}
\usepackage[ruled, vlined, linesnumbered]{algorithm2e}
\usepackage[utf8]{inputenc}

\DeclareMathOperator{\Tr}{Tr}
\newcommand{\cH}{\mathcal H}
\newcommand{\cN}{\mathcal N}
\newcommand{\cE}{\mathcal E}
\newcommand{\id}{\mathrm{id}}
\newcommand{\EPO}{\mathrm{EPO}}

\newcommand{\ket}[1]{\lvert #1\rangle}
\newcommand{\ketbra}[2]{\lvert #1\rangle\!\langle #2\rvert}
\newcommand{\vket}[1]{\lvert #1\rangle\!\rangle}
\newcommand{\vbra}[1]{\langle\!\langle #1\rvert}

\usepackage{bm}
\usepackage{times}

\usepackage{mathtools}
\usepackage{amsmath}
\usepackage[shortlabels]{enumitem}
\usepackage{graphicx,epic,eepic,epsfig,amsmath,latexsym,amssymb,verbatim,color}
 
\usepackage{amsfonts}       % blackboard math symbols
\usepackage{nicefrac}       % compact symbols for 1/2, etc.

\usepackage{amsmath}
\usepackage{bbm}

\usepackage{float}
\usepackage{tikz}
\usetikzlibrary{chains}
\usetikzlibrary{fit}
\usetikzlibrary{arrows}
\usepackage{epsfig}
\usetikzlibrary{shapes.symbols,patterns} % for source symbols
\usepackage{pgfplots}
\pgfplotsset{compat=1.18}

\usepackage[strict]{changepage}
\usepackage{hyperref}
\hypersetup{colorlinks=true,citecolor=blue,linkcolor=blue,filecolor=blue,urlcolor=blue,breaklinks=true}

\usepackage[marginal]{footmisc}
\usepackage{url}
\usepackage{theorem}

\newtheorem{definition}{Definition}
\newtheorem{proposition}{Proposition}
\newtheorem{lemma}[proposition]{Lemma}

\newtheorem{theorem}[proposition]{Theorem}

\newtheorem{corollary}[proposition]{Corollary}

\def\squareforqed{\hbox{\rlap{$\sqcap$}$\sqcup$}}
\def\qed{\ifmmode\squareforqed\else{\unskip\nobreak\hfil
\penalty50\hskip1em\null\nobreak\hfil\squareforqed
\parfillskip=0pt\finalhyphendemerits=0\endgraf}\fi}
\def\endenv{\ifmmode\;\else{\unskip\nobreak\hfil
\penalty50\hskip1em\null\nobreak\hfil\;
\parfillskip=0pt\finalhyphendemerits=0\endgraf}\fi}
\newenvironment{proof}{\noindent \textbf{{Proof~} }}{\hfill $\blacksquare$}

\newcounter{remark}

\newcounter{example}

\mathchardef\ordinarycolon\mathcode`\:
\mathcode`\:=\string"8000
\def\vcentcolon{\mathrel{\mathop\ordinarycolon}}
\begingroup \catcode`\:=\active
  \lowercase{\endgroup
  \let :\vcentcolon
  }

\usepackage{cleveref}
\usepackage{graphicx}
\usepackage{xcolor}

\RequirePackage[framemethod=default]{mdframed}
\newmdenv[skipabove=7pt,
skipbelow=7pt,
backgroundcolor=darkblue!15,
innerleftmargin=5pt,
innerrightmargin=5pt,
innertopmargin=5pt,
leftmargin=0cm,
rightmargin=0cm,
innerbottommargin=5pt,
linewidth=1pt]{tBox}

\newmdenv[skipabove=7pt,
skipbelow=7pt,
backgroundcolor=blue2!25,
innerleftmargin=5pt,
innerrightmargin=5pt,
innertopmargin=5pt,
leftmargin=0cm,
rightmargin=0cm,
innerbottommargin=5pt,
linewidth=1pt]{dBox}
\newmdenv[skipabove=7pt,
skipbelow=7pt,
backgroundcolor=darkkblue!15,
innerleftmargin=5pt,
innerrightmargin=5pt,
innertopmargin=5pt,
leftmargin=0cm,
rightmargin=0cm,
innerbottommargin=5pt,
linewidth=1pt]{sBox}
\definecolor{darkblue}{RGB}{0,76,156}
\definecolor{darkkblue}{RGB}{0,0,153}
\definecolor{blue2}{RGB}{102,178,255}
\definecolor{darkred}{RGB}{195,0,0}
\newcommand{\nc}{\newcommand}
\nc{\rnc}{\renewcommand}
\nc{\lbar}[1]{\overline{#1}}
\nc{\proj}[1]{| #1\rangle\!\langle #1 |}
\nc{\avg}[1]{\langle#1\rangle}
\nc{\rank}{\operatorname{Rank}}
\nc{\smfrac}[2]{\mbox{$\frac{#1}{#2}$}}
\nc{\tr}{\operatorname{Tr}}
\nc{\ox}{\otimes}
\nc{\dg}{\dagger}
\nc{\dn}{\downarrow}
\nc{\cA}{{\cal A}}
\nc{\cB}{{\cal B}}
\nc{\cC}{{\cal C}}
\nc{\cD}{{\cal D}}
\nc{\cF}{{\cal F}}
\nc{\cG}{{\cal G}}
\nc{\cI}{{\cal I}}
\nc{\cJ}{{\cal J}}
\nc{\cK}{{\cal K}}
\nc{\cL}{{\cal L}}
\nc{\cM}{{\cal M}}
\nc{\cO}{{\cal O}}
\nc{\cP}{{\cal P}}
\nc{\cQ}{{\cal Q}}
\nc{\cR}{{\cal R}}
\nc{\cS}{{\cal S}}
\nc{\cT}{{\cal T}}
\nc{\cU}{{\cal U}}
\nc{\cV}{{\cal V}}
\nc{\cX}{{\cal X}}
\nc{\cY}{{\cal Y}}
\nc{\cZ}{{\cal Z}}
\nc{\cW}{{\cal W}}
\nc{\csupp}{{\operatorname{csupp}}}
\nc{\qsupp}{{\operatorname{qsupp}}}
\nc{\var}{{\operatorname{var}}}
\nc{\rar}{\rightarrow}
\nc{\lrar}{\longrightarrow}
\nc{\polylog}{{\operatorname{polylog}}}
\nc{\wt}{{\operatorname{wt}}}
\nc{\supp}{{\operatorname{supp}}}

\nc{\argmin}{{\operatorname{argmin}}}

\def\x{\xi}

\nc{\RR}{{{\mathbb R}}}
\nc{\CC}{{{\mathbb C}}}
\nc{\FF}{{{\mathbb F}}}
\nc{\NN}{{{\mathbb N}}}
\nc{\ZZ}{{{\mathbb Z}}}
\nc{\PP}{{{\mathbb P}}}
\nc{\QQ}{{{\mathbb Q}}}
\nc{\UU}{{{\mathbb U}}}
\nc{\EE}{{{\mathbb E}}}
\nc{\CHSH}{{\operatorname{CHSH}}}

\nc{\<}{\langle}
\rnc{\>}{\rangle}
\nc{\rU}{\mbox{U}}

\nc{\ob}[1]{#1}

\nc{\SEP}{{\text{\rm SEP}}}
\nc{\NS}{{\text{\rm NS}}}
\nc{\LOCC}{{\text{\rm LOCC}}}
\nc{\PPT}{{\text{\rm PPT}}}
\nc{\EXT}{{\text{\rm EXT}}}
\nc{\Sym}{{\operatorname{Sym}}}

\nc{\ERLO}{{E_{\text{r,LO}}}}
\nc{\ERLOCC}{{E_{\text{r,LOCC}}}}
\nc{\ERPPT}{{E_{\text{r,PPT}}}}
\nc{\ERLOCCinfty}{{E^{\infty}_{\text{r,LOCC}}}}
\nc{\Aram}{{\operatorname{\sf A}}}

\usepackage{tikz}
\usepackage{hyperref}
\hypersetup{colorlinks=true,citecolor=blue,linkcolor=blue,filecolor=blue,urlcolor=blue,breaklinks=true}

\makeatletter
\def\grd@save@target#1{%
  \def\grd@target{#1}}
\def\grd@save@start#1{%
  \def\grd@start{#1}}
\tikzset{
  grid with coordinates/.style={
    to path={%
      \pgfextra{%
        \edef\grd@@target{(\tikztotarget)}%
        \tikz@scan@one@point\grd@save@target\grd@@target\relax
        \edef\grd@@start{(\tikztostart)}%
        \tikz@scan@one@point\grd@save@start\grd@@start\relax
        \draw[minor help lines,magenta] (\tikztostart) grid (\tikztotarget);
        \draw[major help lines] (\tikztostart) grid (\tikztotarget);
        \grd@start
        \pgfmathsetmacro{\grd@xa}{\the\pgf@x/1cm}
        \pgfmathsetmacro{\grd@ya}{\the\pgf@y/1cm}
        \grd@target
        \pgfmathsetmacro{\grd@xb}{\the\pgf@x/1cm}
        \pgfmathsetmacro{\grd@yb}{\the\pgf@y/1cm}
        \pgfmathsetmacro{\grd@xc}{\grd@xa + \pgfkeysvalueof{/tikz/grid with coordinates/major step}}
        \pgfmathsetmacro{\grd@yc}{\grd@ya + \pgfkeysvalueof{/tikz/grid with coordinates/major step}}
        \foreach \x in {\grd@xa,\grd@xc,...,\grd@xb}
        \node[anchor=north] at (\x,\grd@ya) {\pgfmathprintnumber{\x}};
        \foreach \y in {\grd@ya,\grd@yc,...,\grd@yb}
        \node[anchor=east] at (\grd@xa,\y) {\pgfmathprintnumber{\y}};
      }
    }
  },
  minor help lines/.style={
    help lines,
    step=\pgfkeysvalueof{/tikz/grid with coordinates/minor step}
  },
  major help lines/.style={
    help lines,
    line width=\pgfkeysvalueof{/tikz/grid with coordinates/major line width},
    step=\pgfkeysvalueof{/tikz/grid with coordinates/major step}
  },
  grid with coordinates/.cd,
  minor step/.initial=.2,
  major step/.initial=1,
  major line width/.initial=2pt,
}
\makeatother

\usepackage{thmtools}
\usepackage{thm-restate}
\usepackage{etoolbox}
\makeatletter
\def\problem@s{}
\newcounter{problems@cnt}

\newcommand{\allproblems}{\problem@s}
\makeatother

\definecolor{beamer}{rgb}{0.2,0.2,0.7}
\definecolor{colorone}{rgb}{1,0.36,0.03}
\definecolor{colortwo}{rgb}{0.4,0.77,0.17}
\definecolor{colorthree}{rgb}{0.01,0.51,0.93}
\definecolor{colorfour}{rgb}{0.47,0.26,0.58}
\definecolor{colorfive}{rgb}{0.12,0.55,0.16}
\usepackage{tcolorbox}
\usepackage{relsize}
\usepackage{graphicx}
\usepackage{booktabs}
\usepackage{amsmath}
\usepackage{tikz}
\usetikzlibrary{quantikz}
\nc{\st}{\text{subject to} \ }
\nc{\supre}{\text{supremum} \ }
\nc{\sdp}{\text{sdp}}
\usepackage{array}

\allowdisplaybreaks
\nc{\ith}[1]{{#1}^\mathrm{th}}
\begin{document}

\title{Universal quantum state purification under energy-preserving constraints}

\author{Xing-Chen Guo}
\affiliation{Thrust of Artificial Intelligence, Information Hub, The Hong Kong University of Science and Technology (Guangzhou), Guangzhou 511453, China}
\affiliation{School of Mathematics, South China University of Technology, Guangzhou 510641, China}

\author{Benchi Zhao}
\email{benchizhao@gmail.com}
\affiliation{QICI Quantum Information and Computation Initiative, Department of Computer Science, The University of Hong Kong, Pokfulam Road, Hong Kong}

\author{Xin Wang}
\email{felixxinwang@hkust-gz.edu.cn}
\affiliation{Thrust of Artificial Intelligence, Information Hub, The Hong Kong University of Science and Technology (Guangzhou), Guangzhou 511453, China}

\begin{abstract}
Pure quantum states are a fundamental resource for quantum information processing, but unavoidable noise degrades their purity and limits the performance of quantum tasks. Quantum state purification attempts to restore purity by jointly processing several noisy copies of an unknown state, typically in a probabilistic way. Whether purification is possible, and how much it can achieve, depends strongly on the set of operations that are physically allowed. In this work, we study universal purification under \textit{energy-preserving operations}, which cannot exchange energy with an external environment. We identify exactly when the energy-preserving constraint forbids any universal purification advantage, and we provide concrete examples in which an advantage available to unrestricted operations disappears under energy conservation. When energy-preserving purification remains possible, we determine its optimal fidelity and success probability and construct the corresponding protocols. The framework includes unrestricted purification as a special case and extends naturally to purification assisted by an external energy resource.

% Quantum state purification probabilistically converts multiple noisy copies into an output of higher fidelity, thereby restoring state purity from quantum noises. Whether such a purification is possible and how effective it is, however, depends on which operations are physically allowed. Here we study universal purification under \textit{energy-preserving operations}, isolating the purification achievable through collective information processing alone without drawing on an external energy supply. For a class of noise channels including depolarizing ones, we give a necessary and sufficient condition for determining when the energy-preserving restriction forbids universal purification. Concrete examples justify that the energy-preserving constraint alone can indeed eliminate a purification advantage available to unrestricted operations, providing a genuine no-go result for universal purification under specific energy-preserving constraints. For the case where energy-preserving universal purification is possible, we identify the optimal purification performance and the associated purification protocol. We further construct explicit energy-preserving implementations of the optimal protocols from their Choi operators and illustrate their performance numerically. Our framework includes unrestricted purification as a special case and extends seamlessly to settings assisted by a specified external energy resource. These results identify the fundamental purification limits imposed by energy conservation and characterize the optimal energy-preserving purifiers in principle.
\end{abstract}
% \date{\today}
\maketitle

%%%%%%%%%%%%%%%%%%%%%%%%%%%%%%%%%%%%%%%%%%%%%%%%%%%%%%%%%%%%
\onecolumngrid
\section{Introduction}

Pure states are essential to quantum information processing, but unavoidable noise degrades their purity and limits the performance of quantum information tasks~\cite{divincenzo2000physical,preskill2018quantum,guo2023noise,rower2025qubit}. Increasing state purity is therefore a central challenge for quantum technologies. Alongside quantum error correction, quantum state purification offers another route: it processes multiple noisy copies to produce an output of higher fidelity. Because useful gains can already be obtained from modest numbers of qudits, purification is particularly relevant to resource-limited settings~\cite{bennett1996purification,cirac1999optimal,keyl2001rate,childs2025streaming}.

Purification appears in two related operational settings. When the desired resource is specified, distillation protocols exploit its known structure to extract higher-quality entangled states or magic states from noisy inputs~\cite{bennett1996purification,Devetak2003a,Fang2019c,zhao2021practical,Briegel1998,Dur1999,Azuma2015a,bravyi2012magic,itogawa2025efficient,sales2025experimental}. Universal purification addresses a different task: the underlying pure state is unknown, and one seeks a single protocol that increases the average output fidelity over arbitrary pure inputs affected by a given noise channel~\cite{Barenco1997,cirac1999optimal,keyl2001rate,fiuravsek2004optimal,childs2025streaming,yao2025protocols,liu2025no,Li2024a}. In the probabilistic setting, output fidelity and success probability quantify complementary aspects of performance, namely quality and yield~\cite{fiuravsek2004optimal,yao2025protocols}.

Purifiability, however, is not a property of the noisy ensemble alone. A purification protocol is a physical transformation, and its attainable performance depends on the available class of operations. General no-go results already show that restrictions on state transformations can fundamentally limit purification, even when postselection is allowed~\cite{fang2020no}. In universal purification, restricting the protocol to local operations and classical communication or to classically simulable operations likewise changes which fidelity improvements are accessible~\cite{zhao2025power,he2025no}. Thus, before asking how much purity can be recovered, one must specify which operations are physically permitted.

Energy conservation provides a fundamental and operationally well-defined restriction. At microscopic scales, energy conservation restricts the possible state transitions, while an explicit work-storage system accounts for the energy supplied to or extracted from the system~\cite{horodecki2013fundamental}. Here we restrict the purifier to energy-preserving operations. They preserve the complete measurement statistics of a fixed Hamiltonian, and can be implemented without exchanging energy with an external system~\cite{chiribella2017optimal}. This restriction isolates the purification achievable through collective information processing alone. It also makes the central questions precise: when do energy-preserving constraints forbid universal purification, and what performance remains attainable when such a purification is possible?

We answer these questions for a class of noise channels that includes depolarizing noises. We derive a necessary and sufficient condition for the impossibility of universal energy-preserving purification and give explicit examples in which the energy constraint eliminates a purification advantage available to unrestricted operations. When purification is possible, we determine the maximum average fidelity and the maximum average success probability at that fidelity, identify the corresponding protocols, and construct energy-preserving implementations from their Choi operators. Numerical examples illustrate the resulting performance. The framework recovers unrestricted purification as a special case and extends to purification assisted by a specified external energy resource. These results establish the fundamental purification limits imposed by energy conservation and separate the gains obtainable from collective information processing from those that require energetic assistance.
%%%%%%%%%%%%%%%%%%%%%%%%%%%%%%%%%%%%%%%%%%%%%%%%%%%%%%%%%%

\section{Energy-preserving operations}

We begin with the definition of energy-preserving operations. A detailed study of which can be found in Ref.~\cite{chiribella2017optimal}. For convenience, we adopt some new terminology and notation in this paper.   

\begin{definition}[Energy-preserving operations]
    For Hamiltonian $H \in \mathcal{B}(\mathcal{H})$, a channel $\Lambda \in \text{CPTP}(\mathcal{B}(\mathcal{H}), \mathcal{B}(\mathcal{H}))$ is termed energy-preserving for $H$ if
    \begin{equation}
        \Lambda^\dagger(H^n) = H^n, \quad\quad \forall n \in \mathbb{N}.
    \end{equation}
    An instrument $\{\Lambda_x\}_{x}$, with $\Lambda_x \in \text{CPTN}(\mathcal{B}(\mathcal{H}), \mathcal{B}(\mathcal{H}))$ and $\sum_x \Lambda_x \in \text{CPTP}(\mathcal{B}(\mathcal{H}), \mathcal{B}(\mathcal{H}))$, is energy-preserving for $H$ if $\sum_x \Lambda_x$ is an energy-preserving channel for $H$. An operation $\Lambda \in \text{CPTN}(\mathcal{B}(\mathcal{H}), \mathcal{B}(\mathcal{H}))$ is energy-preserving for $H$ if it belongs to some energy-preserving instrument for $H$. We denote the set of all such operations by $\text{EPO}(H)$.
\end{definition}

Physically, a channel $\Lambda$ is energy-preserving for $H$, if and only if when acting on a system with Hamiltonian $H$, it leaves the measurement statistics of the energy observable unchanged, which captures the meaning of \textit{energy-preserving}. 

Note that energy-preserving operations can indeed be implemented in an energy-preserving manner: system-environment interactions followed by projective measurements that all preserve the energy of the system. This fact is formalized as follows.

\begin{lemma}[Ref.~\cite{chiribella2017optimal}]
\label{lem1}
Let $\Lambda \in \text{CPTN}(\mathcal{B}(\mathcal{H}), \mathcal{B}(\mathcal{H}))$ and let $H \in \mathcal{B}(\mathcal{H})$ be a Hamiltonian with spectral decomposition $H = \sum_E E P_E$. Then $\Lambda \in \text{EPO}(H)$ if and only if one of the following conditions holds:
\begin{enumerate}[leftmargin=*, align=left]
\item \textbf{Stinespring:} $\Lambda$ admits an implementation of the form
    \[
    \Lambda(\rho) = \operatorname{Tr}_{\mathcal{H}_{\text{env}}} \left[ I \otimes Q_{\text{env}} \cdot\, U (\rho \otimes \ket{\phi_1}\bra{\phi_1}) U^\dagger \right], \ \ \forall \rho \in \mathcal{B}(\mathcal{H}),
    \]
    where:
    \begin{itemize}
        \item $\ket{\phi_1}$ is a ground state of an environment's Hamiltonian $H_{\text{env}} \in \mathcal{B}(\mathcal{H}_{\text{env}})$,
        \item $U$ is a unitary operation satisfying $[U, H \otimes I_{\mathcal{H}_{\text{env}}}] = 0$ and $[U, I \otimes H_{\text{env}}] = 0$,
        \item $Q_{\text{env}}$ is a projector on $\mathcal{H}_{\text{env}}$ with $[Q_{\text{env}}, H_{\text{env}}] = 0$.
    \end{itemize}

\item \textbf{Kraus:} There exists a set of Kraus operators $\{M_k\}_k$ for $\Lambda$ such that $[M_k, H] = 0$ for all $k$.

\item \textbf{Choi:} The Choi operator $\Gamma_\Lambda$ of $\Lambda$ satisfies $\Gamma_\Lambda=\Pi\Gamma_\Lambda\Pi$, where $\Pi := \sum_E P_E^T\otimes P_E$.
\end{enumerate}
\end{lemma}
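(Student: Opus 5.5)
I will show that $\Lambda\in\EPO(H)$ is equivalent to the Kraus condition, and then connect the Kraus condition to the Choi and Stinespring conditions. The central object is the family of spectral projectors $P_E$. The first step is to rewrite the defining condition. Suppose $\Lambda_{\mathrm{tot}}=\sum_x\Lambda_x$ is a channel with $\Lambda_{\mathrm{tot}}^\dagger(H^n)=H^n$ for all $n$. Since each $P_E$ is a polynomial in $H$, this is the same as $\Lambda_{\mathrm{tot}}^\dagger(P_E)=P_E$ for every $E$. Take Kraus operators $\{K_j\}$ for $\Lambda_{\mathrm{tot}}$ and fix $E\neq E'$. Then
\[
\Tr\Bigl[\sum_j P_{E'}K_jP_E K_j^\dagger P_{E'}\Bigr]=\Tr\bigl[P_E\,\Lambda_{\mathrm{tot}}(P_{E'})\bigr],
\]
and the right-hand side is computed as follows. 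Trace preservation gives $\sum_{E''}\Lambda_{\mathrm{tot}}^\dagger(P_{E''})=I$. Combined with $\Lambda_{\mathrm{tot}}^\dagger(P_{E''})=P_{E''}$, duality gives $\Tr[P_E\Lambda_{\mathrm{tot}}(P_{E'})]=\Tr[\Lambda_{\mathrm{tot}}^\dagger(P_E)P_{E'}]=\Tr[P_EP_{E'}]=0$. The left-hand side is a sum of positive terms, so each vanishes: $P_{E'}K_jP_E=0$ for $E\neq E'$. Hence every $K_j$ is block diagonal, i.e.\ $[K_j,H]=0$.

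The same holds for an element $\Lambda_x$ of the instrument. Its Kraus operators form a subset of a Kraus decomposition of $\Lambda_{\mathrm{tot}}$, namely the union of the Kraus sets of all the $\Lambda_x$. So the EPO condition implies the Kraus condition. For the converse, take commuting Kraus operators $M_k$ for $\Lambda$. Complete the instrument with the single extra operation whose Kraus operator is $\sqrt{I-\sum_kM_k^\dagger M_k}$. This operator is a function of an operator commuting with $H$, so it commutes with $H$ too. The total channel then has Kraus operators that all commute with $H$, and $\sum_k M_k^\dagger H^n M_k=H^n\sum_kM_k^\dagger M_k=H^n$.

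For the Choi condition, take $\Gamma_\Lambda=\sum_k\vket{M_k}\vbra{M_k}$ with $\vket{M}=(I\otimes M)\vket{I}$. Because $(P_E^T\otimes P_{E'})\vket{M}=\vket{P_{E'}MP_E}$, the relation $\Pi\vket{M}=\vket{M}$ holds exactly when $M$ is block diagonal. Now $\Gamma=\Pi\Gamma\Pi$ holds iff the range of $\Gamma$ lies in the range of $\Pi$. That holds iff every vector in the range, in particular every vector of a spectral or Kraus decomposition, is fixed by $\Pi$. So the Choi condition is equivalent to the existence of commuting Kraus operators.

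For Stinespring, one direction is direct. A $U$ commuting with $H\otimes I$ can be expanded in an eigenbasis $\{\ket{e_i}\}$ of $H_{\mathrm{env}}$, giving Kraus operators $M_{i}=\bra{e_i}U\ket{\phi_1}$ with $\ket{e_i}$ in the range of $Q_{\mathrm{env}}$. These inherit $[M_i,H]=0$. For the converse, given commuting $\{M_k\}$ completed as above to an energy-preserving instrument, I will build an isometry $V\ket{\psi}=\sum_k M_k\ket{\psi}\otimes\ket{k}$. I will take $H_{\mathrm{env}}$ to be degenerate, for instance $H_{\mathrm{env}}=0$, so that every environment state is a ground state and the condition $[U,I\otimes H_{\mathrm{env}}]=0$ is trivial. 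I will let $Q_{\mathrm{env}}$ project onto the labels belonging to $\Lambda$. The main obstacle is extending $V$ to a unitary $U$ on $\mathcal{H}\otimes\mathcal{H}_{\mathrm{env}}$ that commutes with $H\otimes I$. $V$ commutes with $H$ in the sense $V P_E=(P_E\otimes I)V$, so it maps each $P_E\mathcal{H}$ isometrically into $P_E\mathcal{H}\otimes\mathcal{H}_{\mathrm{env}}$. It therefore suffices to extend it blockwise, within each energy sector, to a unitary on $P_E\mathcal{H}\otimes\mathcal{H}_{\mathrm{env}}$. This is possible because each sector is finite dimensional with matching dimensions, taking the environment dimension at least the number of Kraus operators.
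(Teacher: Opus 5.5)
Your proof is correct, and wherever the paper gives an argument rather than citing Ref.~\cite{chiribella2017optimal}, you take the same route. Completing $\Lambda$ with the single Kraus operator $\sqrt{I-\sum_k M_k^\dagger M_k}$ is exactly Lemma~\ref{lem2}. Your dilation is the construction in the section ``Implementation of energy-preserving purification'': a fully degenerate $H_{\mathrm{env}}$, the isometry $\ket{\psi}\mapsto\sum_k M_k\ket{\psi}\otimes\ket{k}$ extended to a unitary sector by sector on $P_E\mathcal{H}\otimes\mathcal{H}_{\mathrm{env}}$, and $Q_{\mathrm{env}}$ projecting onto the labels of $\Lambda$'s Kraus operators. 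The paper only imports EPO $\Rightarrow$ Kraus and Choi $\Leftrightarrow$ Kraus from the reference. You prove them correctly via $\Tr[P_E\Lambda_{\mathrm{tot}}(P_{E'})]=\Tr[\Lambda_{\mathrm{tot}}^\dagger(P_E)P_{E'}]=0$ and $(P_E^T\otimes P_{E'})\vket{M}=\vket{P_{E'}MP_E}$. The one slip is that your displayed left-hand side actually equals $\Tr[P_{E'}\Lambda_{\mathrm{tot}}(P_E)]$; this is harmless because the argument is symmetric in $E\neq E'$.
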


%%%%%%%%%%%%%%%%%%%%%%%%%%%%%%%%%%%%%%%%%%%%%%%
\section{Universal energy-preserving purification protocol}
% We start from the definition of energy-preserving operation. For a given Hamiltonian $H$, a channel $\Lambda \in \text{CPTP}(\mathcal{B}(\mathcal{H}), \mathcal{B}(\mathcal{H}))$ is termed energy-preserving for $H$ if
% $\Lambda^\dagger(H^n) = H^n, \,\forall n \in \mathbb{N}$, 
% where $\mathcal{B}(\mathcal{H})$ stands for the linear operator space of the Hilbert space $\cH$.
% An instrument $\{\Lambda_x\}_{x}$, with $\Lambda_x \in \text{CPTN}$ and $\sum_x \Lambda_x \in \text{CPTP}$, is energy-preserving for $H$ if $\sum_x \Lambda_x$ is an energy-preserving channel for $H$. An operation $\Lambda \in \text{CPTN}$ is energy-preserving for $H$ if it belongs to some energy-preserving instrument for $H$~\cite{chiribella2017optimal}. Here we denote the set of all such operations by $\text{EPO}(H)$.

Let $\cE = \tr_{k+1,\cdots,n}\circ\Lambda$ be an $n\rightarrow k$ CPTN map, where the energy-preserving operation $\Lambda\in\mathrm{EPO}(H)$ acts on the $n$-copy system before the last $n-k$ subsystems are discarded. We impose the energy-preserving constraint on this pre-discarding operation $\Lambda$ and focus on $k=1$. Given $n$ copies of $\cN(\psi)$, where $\psi$ is a pure state and $\cN$ is a quantum noise channel, the output state after applying $\cE$ is
\begin{equation}
    \sigma_\psi := \frac{\hat{\sigma}_\psi}{p_\psi} = \frac{\cE(\cN(\psi)^{\ox n})}{\tr[\cE(\cN(\psi)^{\ox n})]},
\end{equation}
where $\hat{\sigma}_\psi=\cE(\cN(\psi)^{\ox n})$ stands for the unnormalized purified state, and $p_\psi=\tr[\cE(\cN(\psi)^{\ox n})]$ refers to the success probability of getting the state $\sigma_\psi$. If the output state $\sigma_\psi$ is closer to the target state $\psi$, i.e., $F(\sigma_\psi, \psi) >  F(\cN(\psi), \psi)$ where $F(\rho,\sigma)$ refers to the fidelity between two states $\rho$ and $\sigma$, then the map $\cE$ is an $n\rightarrow 1$ \textit{energy-preserving purification protocol} for quantum noise $\cN$ and state $\psi$ over Hamiltonian $H$. The diagram of the framework is shown in Fig.~\ref{fig:diagram}. If the map $\cE$ could increase the fidelity on average for all pure states, we would call it an $n\rightarrow 1$ \textit{universal energy-preserving purification protocol}. Specifically, we have the following Definition~\ref{def1}.

\begin{figure}[t]
    \centering 
    \includegraphics[width=0.45\textwidth]{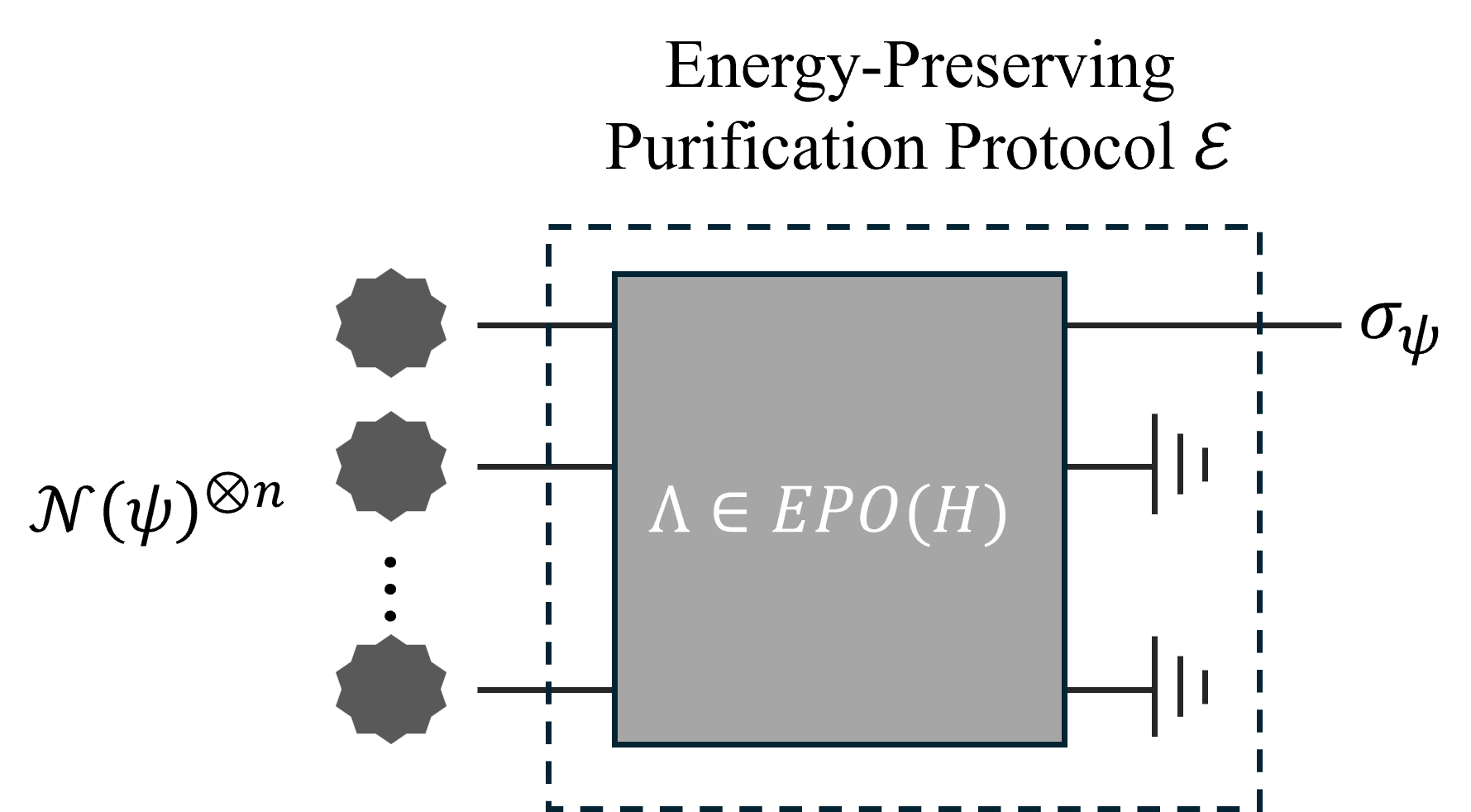}
    \caption{The framework of $n\rightarrow1$ energy-preserving purification. For a given Hamiltonian $H$, the purification protocol $\cE$ applies an energy-preserving operation with respect to the Hamiltonian $\Lambda\in \text{EPO}(H)$ and outputs the first system $\sigma_\psi$, whose fidelity is higher, and partial traces the rest of the systems.}
    \label{fig:diagram} 
\end{figure}

\begin{definition}[Universal energy-preserving purification]
\label{def1}
Let $\mathcal{N} \in \text{CPTP}(\mathcal{B}(\mathcal{H}), \mathcal{B}(\mathcal{H}))$ be the noise channel and $H \in \mathcal{B}(\mathcal{H}^{\otimes n})$ be the Hamiltonian of the $n$-copy system. An $n\rightarrow 1$ map $\mathcal{E} \in \text{CPTN}(\mathcal{B}(\mathcal{H}^{\otimes n}), \mathcal{B}(\mathcal{H}))$ is a universal energy-preserving purification protocol for the triplet $(\mathcal{N}, H, n)$, if all the following conditions are satisfied:
\noindent\begin{enumerate}
    \item \textbf{Zero energy cost:} The protocol is realizable without consuming energy resources, meaning it admits a decomposition $\mathcal{E} = \operatorname{Tr}_{2\cdots n} \circ \Lambda$ for some $\Lambda \in \text{EPO}(H)$;
    \item \textbf{Positive average success probability:} The protocol succeeds with a strictly positive probability on average,
    \begin{equation}
        \Bar{p}(\cN,\cE,n) \coloneqq \int \text{d}\psi \, \operatorname{Tr}[\mathcal{E}(\mathcal{N}(\psi)^{\otimes n})] > 0;
    \end{equation}
    \item \textbf{Increase average fidelity:} The protocol yields an average fidelity strictly greater than the raw fidelity $F_{\rm raw}(\cN)$ of the unpurified noisy states,
    \begin{align}\label{eq:ave_fid}
    \Bar{F}(\cN,\cE,n) &\coloneqq \frac{\int \text{d}\psi \, \operatorname{Tr}\!\left[\psi\,\cE(\cN(\psi)^{\otimes n})\right]}{\Bar{p}(\cN,\cE,n)} > F_{\rm raw}(\cN),\\
    F_{\rm raw}(\cN)&\coloneqq\int d\psi \,\operatorname{Tr}[\psi\cN(\psi)],
    % &=\frac{\int \text{d}\psi \, F(\operatorname{Tr}_{2\cdots n} \circ id(\cN(\psi)^{\otimes n}), \psi)}{\Bar{p}(\cN,\operatorname{Tr}_{2\cdots n} \circ id, H,n)},
    \end{align}
\end{enumerate}
where all integrals are over pure states with respect to the normalized Haar measure. And we call $\Bar{p}(\cN,\cE,n)$ the average success probability and $\Bar{F}(\cN,\cE,n)$ the average fidelity. The set of all universal energy-preserving purification protocols is termed \textit{UEPP}$(\cN, H, n)$.
\end{definition}

%\begin{definition}\label{def:universal purification} \text{(Universal energy-preserving purification protocol)}
%Let $\mathcal{N} \in \text{CPTP}(\mathcal{B}(\mathcal{H}), \mathcal{B}(\mathcal{H}))$ be a non-trivial depolarizing noise and $H\in\mathcal{B}(\mathcal{H}^{\otimes n})$ an n-qudit Hamiltonian with spectral decomposition $H=\sum_EEP_E$. Then an $n\rightarrow 1$ CPTN map $\cE=\tr_{2,\cdots,n}\circ \Lambda$ is a universal energy-preserving purification protocol for triple $(\mathcal{N}, H, n)$, if $\Lambda$ is energy-preserving respect to $H$ and increases the average fidelity for all pure states, i.e., 
%\begin{align}\label{eq:ave_fid}
    %\Bar{F}(\cN,\cE, H,n) &= \frac{\int \text{d}\psi \,\operatorname{Tr}[\psi\cdot\cE(\mathcal{N}(\psi)^{\otimes n})]}{\Bar{p}(\cN,\cE, H,n)} \\
    %&\ge \int d\psi \, F(\cN(\psi), \psi),
%\end{align}
%where $\Bar{p}(\cN,\cE, H,n)=\int \text{d}\psi \,\operatorname{Tr}[\cE(\mathcal{N}(\psi)^{\otimes n})]$ is the average success probability. The integration is taken over all pure states under the Haar measure.
%\end{definition}

%In Definition~\ref{def:universal purification}, the average fidelity is calculated by the counting in the successful cases only. Specifically, quantum pure states uniformly were sampled, apply the purification protocol was applied, all failure cases were ruled out, and the average fidelity of successful cases could be computed by Eq.~\ref{eq:ave_fid}. 

Definition~\ref{def1} establishes the physical criteria for a valid purification machine operating under strict energy conservation. The requirement of zero energy cost imposes thermodynamic neutrality, implying that the restoration of purity arises solely from the collective information processing of the noisy copies rather than through the injection of external energy. Furthermore, the stipulations regarding success probability and fidelity ensure that the protocol is operationally viable and provides a genuine advantage over raw noisy states, distinguishing effective purification from trivial operations that fail to mitigate quantum noise.

The results below apply to every noise channel satisfying
\begin{equation}\label{eq:noise-condition}
 \int d\psi\,\cN(\psi)^{\otimes n}=\cN^{\otimes n}(s_n)>0,
\end{equation}
where $s_n=\int d\psi\,\psi^{\otimes n}$ is the maximally mixed state on the symmetric subspace. Equation~\eqref{eq:noise-condition} requires the Haar-averaged noisy $n$-copy ensemble to have full support. The nontrivial depolarizing channel $\cN_\gamma(\psi)=(1-\gamma)I/d+\gamma\psi$, $0<\gamma<1$, is an important special case and is retained in the analytic and numerical examples.

\begin{theorem}
\label{thm1}
Let $\cN$ satisfy Eq.~\eqref{eq:noise-condition}, and let $H$ be a given Hamiltonian. There exists no $n\rightarrow1$ universal energy-preserving purification protocol
% for the triple $(\mathcal{N},H,n)$
if and only if
\begin{equation}\label{thm1-1}
C^{-\frac{1}{2}}P_m C^{\frac{1}{2}}\cdot \Gamma_{id}\cdot C^{\frac{1}{2}} P_m C^{-\frac{1}{2}} = \Gamma_{id},
\end{equation}
where $\Gamma_{id}$ is the Choi operator of the identity channel $id$, 
% Encapsulating the information of $(\mathcal{N}, H, n)$, 
and the operators $A$ and $C$ are defined as
\begin{align}
    A &= (\mathcal{N}^{\otimes n}\otimes id\ (s_{n+1}))^{\text{T}_{1\cdots n}} \otimes I_{\mathcal{H}^{\otimes (n-1)}}, \\ 
    C &= \Pi\cdot (\mathcal{N}^{\otimes n}(s_n))^{\text{T}} \otimes I_{\mathcal{H}^{\otimes n}}\cdot \Pi,
\end{align}
with $\Pi = \sum_E P_E^T \otimes P_E$ constructed from the spectral decomposition $H=\sum_E E P_E$; $s_k$ denotes the maximally mixed state on the symmetric subspace of $\mathcal{H}^{\otimes k}$; $\text{T}_{1\cdots n}$ is the partial transpose over the first $n$ subsystems; $P_m$ is the projector on the eigenspace of $C^{-\frac{1}{2}} A C^{-\frac{1}{2}}$ with maximum eigenvalue.
\end{theorem}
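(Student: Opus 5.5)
We will reduce the existence question to a generalized eigenvalue problem over Choi operators. First, write the average quantities linearly in the Choi operator $\Gamma_\Lambda$ of the pre-discarding operation $\Lambda$ on $\mathcal{H}^{\otimes n}$ (input) $\to\mathcal{H}^{\otimes n}$ (output). Using $\int d\psi\,\psi^{\otimes (n+1)}=s_{n+1}$ and the Choi identity $\Lambda(X)=\tr_{\rm in}[\Gamma_\Lambda (X^{T}\otimes I)]$, we obtain
\begin{equation*}
\bar p=\tr[\Gamma_\Lambda\, (\mathcal{N}^{\otimes n}(s_n))^{T}\otimes I_{\mathcal{H}^{\otimes n}}],\qquad
\int d\psi\,\tr[\psi\,\mathcal{E}(\mathcal{N}(\psi)^{\otimes n})]=\tr[\Gamma_\Lambda A],
\end{equation*}
where the output copies $2,\dots,n$ are traced out (hence the factor $I_{\mathcal{H}^{\otimes(n-1)}}$ in $A$), and the last tensor factor of $s_{n+1}$ is paired with output copy $1$. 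By Lemma~\ref{lem1} (Choi form), $\Lambda\in\EPO(H)$ iff $\Gamma_\Lambda=\Pi\Gamma_\Lambda\Pi$. Moreover, any $\Gamma\ge0$ supported on $\Pi$ with $\tr_{\rm out}\Gamma\le I$ can be completed to an energy-preserving instrument (adding e.g.\ the complementary element whose Kraus operators are built from $\sqrt{I-\tr_{\rm out}\Gamma}$, block-diagonal in energy). Since the fidelity is a ratio, rescaling removes the trace-nonincreasing constraint. Thus $\bar p=\tr[\Gamma C]$ and $\tr[\Gamma A]=\tr[\Gamma \Pi A\Pi]$, and the maximal average fidelity is
\begin{equation*}
F^*=\sup_{\Gamma\ge 0,\ \Gamma=\Pi\Gamma\Pi,\ \tr[\Gamma C]>0}\frac{\tr[\Gamma A]}{\tr[\Gamma C]}.
\end{equation*}

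Second, we solve this ratio. By Eq.~\eqref{eq:noise-condition}, $C$ is positive definite on $\operatorname{supp}\Pi$ (inverses are taken there). Substituting $X=C^{1/2}\Gamma C^{1/2}$, normalized so that $\tr X=1$, gives $F^*=\lambda_{\max}(C^{-1/2}AC^{-1/2})$ restricted to $\Pi$. The optimizers are exactly the $\Gamma=C^{-1/2}XC^{-1/2}$ with $X$ supported on the range of $P_m$.

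Third, we compare with $F_{\rm raw}$. The trivial protocol $\Lambda=\mathrm{id}$ is energy-preserving, since $\Gamma_{id}=\Pi\Gamma_{id}\Pi$ because $\sum_E P_E^T\otimes P_E$ fixes $\vket{I}$. It achieves exactly $F_{\rm raw}$: its $\bar p=1$ and its fidelity is $\int\tr[\psi\mathcal{N}(\psi)]$. Hence no purification exists iff $F^*=F_{\rm raw}$, i.e.\ iff $\Gamma_{id}$ is itself an optimizer. By step two, this holds iff $X_{id}=C^{1/2}\Gamma_{id}C^{1/2}$ lies in the range of $P_m$, i.e.\ iff $P_mX_{id}P_m=X_{id}$. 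Conjugating by $C^{-1/2}$ yields exactly Eq.~\eqref{thm1-1}.

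The main obstacle is this last equivalence. A PSD $X$ satisfies $\tr[X M]/\tr X=\lambda_{\max}$ iff its support lies in the top eigenspace, which uses $X\ge0$. Care is also needed with restrictions to $\operatorname{supp}\Pi$ when defining $C^{-1/2}$, and in checking that $\Pi$ commutes appropriately so that the $\Pi$-compressed $A$ equals the $A$ used in the statement.
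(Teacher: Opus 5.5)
Your proposal is correct and follows essentially the same route as the paper. You write $\bar F=\tr(\Gamma_\Lambda A)/\tr(\Gamma_\Lambda C)$ using $\Gamma_\Lambda=\Pi\Gamma_\Lambda\Pi$, substitute $C^{1/2}\Gamma_\Lambda C^{1/2}$ (normalized) to bound the ratio by $\|C^{-1/2}AC^{-1/2}\|_\infty$, attain the bound with $\Gamma=qC^{-1/2}\sigma C^{-1/2}$ for $\sigma$ supported on $P_m$, and conclude that no purifier exists iff the identity channel attains this maximum, i.e.\ iff condition~\eqref{thm1-1} holds. The paper presents this as two separate implications rather than computing $F^*$ first. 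The compression of $A$ that you flag needs no extra work, since $C^{-1/2}=\Pi C^{-1/2}=C^{-1/2}\Pi$ gives $C^{-1/2}AC^{-1/2}=C^{-1/2}\Pi A\Pi C^{-1/2}$ automatically.
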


%In this work, we focus on a specific class of noises $\mathcal{N}$ that satisfy the condition
%\begin{equation}
%\label{eq:noise condition}
%\int \text{d}\psi \, \mathcal{N}(\psi)^{\otimes n}= \mathcal{N}^{\otimes n}(s_n)> 0,
%\end{equation}
%\bc{@xingchen guo hide it into proof, and in the main text, just say we study depo noise}
%for some positive integer $n$, where $s_n$ denotes the maximally mixed state on the symmetric subspace. Such a noise is very common on quantum devices, and the well-known depolarizing noise is one example, which
%is a canonical quantum noise process defined as $\cN_q(\psi)= q\psi +(1-q) \frac{I}{d}$, where $I$ is the identity matrix, and $d$ is the Hilbert space dimension. Besides, the noise in the form of Eq.~\ref{eq:noise condition} has a good property of symmetry, which is easy to analyze.

The proof of Theorem~\ref{thm1} begins by reformulating the average fidelity of any operation as a ratio of traces involving its Choi operator and two structural matrices that encode the noise and energy constraint, allowing for spectral analysis. For the \textit{only if} direction, we show that if no energy-preserving protocol outperforms the trivial strategy of doing nothing, the identity channel must itself produce the maximum average fidelity, forcing its Choi operator to lie within a specific eigenspace defined by the structural matrices. Conversely, for the \textit{if} direction, we demonstrate that satisfying this eigenspace condition implies that the identity channel already reaches maximum average fidelity, thereby mathematically precluding the existence of any superior purification protocol. The derivation is detailed in Appendix~\ref{supp_sec:B}.

Theorem~\ref{thm1} characterizes exactly when the energy-preserving restriction precludes universal purification. When its criterion holds, no energy-preserving protocol can increase the average fidelity beyond the raw fidelity, making the identity strategy optimal. Operationally, the theorem provides a screening test, determined by the noise and Hamiltonian, for regimes in which no search for a nontrivial energy-preserving purifier can succeed.

The remaining question is whether this no-go criterion is realized by concrete Hamiltonians. The following proposition answers this question affirmatively for a family of nondegenerate 2-qubit Hamiltonians. Since the same noisy 2-copy ensemble admits a strict fidelity improvement under unrestricted operations, the result establishes a genuine no-go imposed specifically by the energy-preserving constraint.

\begin{proposition}[No-go Hamiltonian family]
\label{prop:n2-family-concise}
Let $\cH=\mathbb C^2$, $n=2$, and $\cN_\gamma(\psi)=\gamma\psi+(1-\gamma)\frac{I}{2}$ be the depolarizing qubit noise with parameter $0<\gamma<1$. Consider the 2-qubit Hamiltonian
\begin{equation}
 H=aI+bX\otimes I+cI\otimes Z+dX\otimes Z,
 \qquad a,b,c,d\in\mathbb R,
 \label{eq:n2c-H}
\end{equation}
where $\lvert{b}\rvert, \lvert{c}\rvert, \lvert{d}\rvert$ are 3 distinct numbers. Then the necessary and sufficient no-go condition Eq.~\eqref{thm1-1} is satisfied, and hence no
$2\rightarrow1$ universal energy-preserving purification protocol exists in this case. However, without the energy-preserving constraint, the optimal $2\rightarrow1$ universal purification protocol produces an average fidelity
\begin{equation}
 F_{\max}^{\mathrm{all}}(\cN_\gamma,2)
 =\frac{(1+\gamma)(3+\gamma)}{2(3+\gamma^2)},
 \label{eq:n2c-unrestricted}
\end{equation}
which is strictly greater than the raw fidelity $F_{\text{raw}}(\mathcal{N}_\gamma)=(1+\gamma)/2$. Thus universal purification exists for the same noisy 2-copy ensemble if the energy constraint is removed.
\end{proposition}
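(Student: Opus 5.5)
Since Theorem~\ref{thm1} is an equivalence, I plan to establish the no-go directly: I will show that no operation in $\mathrm{EPO}(H)$ beats $F_{\rm raw}(\cN_\gamma)$, and then read off Eq.~\eqref{thm1-1} from the theorem. Separately, I will compute the unrestricted optimum.

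\emph{Structure of $\mathrm{EPO}(H)$.} The operators $X\otimes I$ and $I\otimes Z$ commute, so $H$ is diagonal in the product basis $\ket{e_{st}}=\ket{u_s}\otimes\ket{v_t}$ with $u_s\in\{\ket{+},\ket{-}\}$ and $v_t\in\{\ket0,\ket1\}$. Its eigenvalues are $a+bs+ct+dst$ for $s,t=\pm1$. The pairwise differences are $\pm2(b+c),\pm2(b+d),\pm2(c+d)$ and $\pm 2(b-c)$, etc. Each of these vanishes only if two of $\lvert b\rvert,\lvert c\rvert,\lvert d\rvert$ coincide, so $H$ is nondegenerate. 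Every $P_E$ is then rank one and $\Pi=\sum_j \ketbra{\bar e_j}{\bar e_j}\otimes\ketbra{e_j}{e_j}$. By the Choi criterion of Lemma~\ref{lem1}, the Choi operator of any $\Lambda\in\mathrm{EPO}(H)$ has the form $\Gamma_\Lambda=\sum_{j,k}G_{jk}\,\ket{\bar e_j e_j}\!\bra{\bar e_k e_k}$ with $G\ge0$ a $4\times4$ matrix and $G_{jj}\le1$. Equivalently, $\Lambda$ is a Schur multiplier in the energy basis: $\Lambda(\ketbra{e_j}{e_k})=G_{jk}\ketbra{e_j}{e_k}$. The protocol is then $\cE=\Tr_2\circ\Lambda$, with $\Tr_2\ketbra{e_j}{e_k}=\langle v_k|v_j\rangle\ketbra{u_j}{u_k}$.

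\emph{Reduction to a $4\times 4$ matrix inequality.} Both the numerator and the denominator of $\Bar F$ are linear in $G$. I will write them as $\Tr[GM]$ and $\Tr[GD]$, where
\[
D_{jk}=\delta_{jk}\int d\psi\,\langle e_j|\cN_\gamma(\psi)^{\otimes2}|e_j\rangle,\qquad
M_{jk}=\int d\psi\,\langle e_j|\rho^{\otimes2}|e_k\rangle\langle v_k|v_j\rangle\langle u_k|\psi|u_j\rangle ,
\]
with $\rho=\cN_\gamma(\psi)$. (The Haar averages of $\rho^{\otimes 2}$ are unitarily invariant, which is why $D$ comes out diagonal; this is also consistent with $\cN_\gamma^{\otimes2}(s_2)>0$.) The no-go claim is then exactly $M-F_{\rm raw}D\le0$, since $\Tr[G(M-F_{\rm raw}D)]\le0$ must hold for all $G\ge0$.

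To compute $M$ and $D$ I will parametrize $\psi=(I+\vec n\cdot\vec\sigma)/2$ and $\rho=(I+\gamma\vec n\cdot\vec\sigma)/2$. I will then use the sphere moments: $\int n_i=0$, $\int n_in_j=\delta_{ij}/3$, and all third moments vanish. Because $\langle v_k|v_j\rangle=\delta_{t_jt_k}$, the matrix $M$ splits into two $2\times2$ blocks, one for each value of $t$. In each block:
\begin{itemize}
\item The diagonal entries are averages of $\tfrac14(1\pm\gamma n_x)(1\pm\gamma n_z)\cdot\tfrac12(1\pm n_x)$.
\item The off-diagonal entries involve $\langle+|\rho|-\rangle=\tfrac{\gamma}{2}(n_z-i n_y)$, together with $\langle v|\rho|v\rangle$ and $\langle -|\psi|+\rangle$, and average to a term of order $\gamma/6$.
\end{itemize}
With $F_{\rm raw}=(1+\gamma)/2$, I then need to check that each $2\times2$ block of $M-F_{\rm raw}D$ is negative semidefinite, via nonpositive trace and nonnegative determinant. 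I expect equality in the determinant, which is the eigenspace degeneracy that makes the identity channel optimal and hence Eq.~\eqref{thm1-1} hold. This algebra is the step I consider the main obstacle: the signs of the off-diagonal terms and the precise cancellation must be tracked carefully for all $0<\gamma<1$. If the determinant turns out not to factor cleanly, I would instead evaluate $C^{-1/2}AC^{-1/2}$ directly in the basis $\{\ket{\bar e_je_j}\}$ and verify Eq.~\eqref{thm1-1} by hand.

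\emph{Unrestricted part.} For Eq.~\eqref{eq:n2c-unrestricted} I will use the known optimal $2\to1$ protocol, namely projection onto the symmetric subspace followed by tracing out one qubit. Its output Bloch length is $\gamma'=\frac{2\gamma}{3}\cdot\frac{2+\gamma}{\ldots}$; concretely, I will compute the success probability $(3+\gamma^2)/4$ and the fidelity numerator directly, and the ratio yields $\frac{(1+\gamma)(3+\gamma)}{2(3+\gamma^2)}$. Optimality follows from the same ratio-of-traces/SDP formulation without the constraint $\Gamma=\Pi\Gamma\Pi$, i.e.\ from the maximal eigenvalue of $C^{-1/2}AC^{-1/2}$ with $\Pi=I$, which is attained by the symmetric projector by $SU(2)$ symmetry. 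Finally, I will check the strict inequality:
\[
(1+\gamma)(3+\gamma)>(1+\gamma)(3+\gamma^2)\iff\gamma>\gamma^2,
\]
which is true for $0<\gamma<1$.
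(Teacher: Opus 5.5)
Your proposal is correct, but it goes through Theorem~\ref{thm1} in the opposite direction from the paper.

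The paper computes $C=\Pi/4$, states that $P_m=\sum_t\vket{\phi_t}\vbra{\phi_t}$, and observes that $\Gamma_{\id}$ lies in the range of $P_m$. It then deduces the no-go from the ``if'' direction of Theorem~\ref{thm1}. It also checks, redundantly, that $[P_{\mathrm{sym}},H]\neq0$.

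You instead prove the no-go directly from the Choi criterion of Lemma~\ref{lem1}, by showing $M-F_{\rm raw}D\le0$ on the four-dimensional range of $\Pi$. You then obtain Eq.~\eqref{thm1-1} from the ``only if'' direction. Your $M$ and $D$ are just the compressions of $A$ and $C$ to the range of $\Pi$, with $D=\frac14 I$, so the underlying $2\times2$ block computation is identical. Your packaging has two advantages: the no-go does not depend on Theorem~\ref{thm1}, and it carries out the step the paper calls routine. The paper's packaging has the advantage of exhibiting $P_m$ explicitly.

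The step you flagged does go through, but the exact constant matters. The diagonal entries of $M$ are $(3+\gamma)/24$. The off-diagonal entries within each $t$-block are exactly $\gamma/12$. Your estimate of $\gamma/6$ would make the determinant negative. With the correct value, each block of $M-F_{\rm raw}D$ equals $\frac{\gamma}{12}\begin{pmatrix}-1&1\\1&-1\end{pmatrix}$. This is negative semidefinite with kernel spanned by $(1,1)$, which is precisely the paper's $\vket{\phi_t}$. The maximal eigenvalue of $4M$ is therefore $(1+\gamma)/2=F_{\rm raw}$.

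A few small corrections:
\begin{itemize}
\item $D$ is diagonal simply because $\Tr\ketbra{e_j}{e_k}=\delta_{jk}$, not because of unitary invariance.
\item The output Bloch length of the symmetric protocol should read $4\gamma/(3+\gamma^2)$.
\item Your ``SU(2) symmetry'' remark does not by itself prove optimality of the symmetric projection for unrestricted operations. The weights of $\cN_\gamma^{\otimes2}(s_2)$ differ on the symmetric and antisymmetric subspaces, so one would still have to diagonalize. Cite Yao et al.\ for this, as the paper does. The separation claim itself needs only achievability, which your computation already supplies.
\end{itemize}
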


\begin{proof}
For $s,t\in\{+1,-1\}$, let $\ket{s_x}$ and $\ket{t_z}$ be the eigenvectors
of Pauli operators $X$ and $Z$ with eigenvalues $s$ and $t$, respectively, and define
$\ket{e_{s,t}}:=\ket{s_x}\otimes\ket{t_z}$.  The four vectors
$\ket{e_{s,t}}$ are eigenvectors of $H$ with energies
$a+bs+ct+dst$. Since $\lvert{b}\rvert, \lvert{c}\rvert, \lvert{d}\rvert$ are 3 distinct numbers, $H$ must be nondegenerate, and we have
\begin{equation}
 \Pi=\sum_{s,t=\pm1}
\ketbra{e_{s,t}^*}{e_{s,t}^*}\otimes\ketbra{e_{s,t}}{e_{s,t}}.
 \label{eq:n2c-Pi}
\end{equation}
Note that $\Pi$ has no dependence on the concrete values of the parameters $a,b,c,d$. Then it is routine to check that
\begin{equation}
 C=\Pi\left[\cN_\gamma^{\otimes2}(s_2)^{T}
 \otimes I_{\cH^{\otimes2}}\right]\Pi=\frac14\Pi.
 \label{eq:n2c-C}
\end{equation}
Denote $\vket{E_{s,t}}:=\ket{e_{s,t}}\otimes\ket{e_{s,t}}$ and $\vket{\phi_t}:=\frac{\vket{E_{+,t}}+\vket{E_{-,t}}}{\sqrt2}$, we have
\begin{equation}
 P_m=\sum_{t=\pm1}\vket{\phi_t}\vbra{\phi_t}.
 \label{eq:n2c-Pm}
\end{equation}
and
\begin{equation}
 \qquad \Gamma_{\id}=(\sum_{s,t=\pm1}\vket{E_{s,t}})\cdot(\sum_{s,t=\pm1}\vbra{E_{s,t}}).
 \label{eq:n2c-id}
\end{equation}
Then $P_m\Gamma_{\id}P_m=\Gamma_{\id}$ holds.  Since
$C^{1/2}=\Pi/2$ and $C^{-1/2}=2\Pi$, we have
\begin{equation}
 C^{-1/2}P_mC^{1/2}\Gamma_{\id}
 C^{1/2}P_mC^{-1/2}=\Pi P_m\Pi\cdot\Gamma_{\id}\cdot
 \Pi P_m\Pi=P_m\Gamma_{\id}P_m=\Gamma_{\id}.
 \label{eq:n2c-condition5}
\end{equation}
Thus the necessary and sufficient condition Eq.~\eqref{thm1-1} holds, and Theorem~2
gives the asserted no-go result.

For comparison, Yao \textit{et al.}\cite{yao2025protocols} proved that for two depolarized copies
the unrestricted optimal universal purification protocol is
$\cE_{\mathrm{sym}}=\Tr_2\circ\mathcal M_2$, where
\begin{equation}
 \mathcal M_2(\rho)=P_{\mathrm{sym}}\rho P_{\mathrm{sym}},
 \qquad P_{\mathrm{sym}}=\frac{I+S}{2}, \qquad \text{$S$ is SWAP operator},
 \label{eq:n2c-sym-protocol}
\end{equation}
and their formulas give Eq.~\eqref{eq:n2c-unrestricted}
\cite{yao2025protocols}.  The operation $\mathcal M_2$ defining
this protocol is not energy-preserving for any
Hamiltonian in the family.  Indeed, an operation
belongs to $\EPO(H)$ if and only if it admits Kraus operators $\{M_k\}_k$
satisfying $[M_k,H]=0$ for every $k$.  The operation $\mathcal M_2$ has
Choi rank one, so every Kraus representation consists of scalar multiples
of $P_{\mathrm{sym}}$.  However, we have
\begin{equation}
 [P_{\mathrm{sym}},H]\ne0.
 \label{eq:n2c-not-epo}
\end{equation}
Hence $\mathcal M_2\notin\EPO(H)$ and the protocol is not energy-preserving.  Finally,
\begin{equation}
 F_{\max}^{\mathrm{all}}(\cN_\gamma,2)
 -F_{\text{raw}}(\cN_\gamma)
 =\frac{(1+\gamma)(3+\gamma)}{2(3+\gamma^2)}-\frac{(1+\gamma)}{2}=\frac{\gamma(1-\gamma^2)}{2(3+\gamma^2)}>0,
 \label{eq:n2c-gap}
\end{equation}
which shows that the protocol $\cE_{\mathrm{sym}}=\Tr_2\circ\mathcal M_2$ can indeed increase the average fidelity.
\end{proof}

This proposition supplies a concrete and physically meaningful family for
which the necessary and sufficient criterion of Theorem~2 is actually
met. Although two noisy copies contain enough information for unrestricted operations to increase the fidelity, the required operations inevitably disturb the energy status, implying the necessity of consuming external energy resources.
The family thus exhibits a strict separation between
energy-preserving and unrestricted purification, directly demonstrates
how an energetic restriction can completely eliminate a purification advantage that is
otherwise available. 

\section{Optimal universal energy-preserving purification}
Recently, Yao \textit{et al.}~\cite{yao2025protocols} revealed that the trade-off between average success probability and average fidelity exhibits a saturation behavior: sacrificing probability enhances fidelity only up to a critical threshold, beyond which the fidelity stabilizes at its global maximum. This saturation constitutes the \textit{golden point} of purification, representing the ideal balance where maximum fidelity is achieved with the highest possible yield. Capturing this insight, we define the \textit{optimal universal energy-preserving purification protocol} as the one that attains this golden point under zero energy cost constraints. This dictates a strict hierarchy of optimization: the protocol must yield the maximum average success probability conditioned on achieving the global maximum average fidelity allowed by energy-preserving operations. This definition ensures the protocol delivers the optimal purification performance while retaining the highest possible efficiency.
%In Yao's work~\cite{yao2025protocols}, they observed an interesting trade-off phenomenon that by tuning down the average success probability $\Bar{p}$, the corresponding average fidelity $\Bar{F}$ increases. But when $\Bar p$ down below the threshold $\Bar{p}^*$, $\Bar{F}$ remains and no longer increases. They called the turning point the \textit{gold point} of purification. Therefore, we define that for a given quantum noise $\cN$, Hamiltonian $H$, an $n\rightarrow 1$ universal energy-preserving purification protocol $\cE^*$ is optimal if its average fidelity $\Bar{F}(\cN,\cE^*,H,n)$ and the average success probability  $\Bar{p}(\cN,\cE^*,H,n)$ are at golden point.

%%%%%%%%%%%%%%%%%%%%%%%%%%%%%%%%%%%%%%%%%%%

%In the Stinespring representation of $\Lambda \in \text{EPO}(H)$, the conditions $[U, H \otimes I_{\mathcal{H}_{\text{env}}}] = [U, I_{\mathcal{H}} \otimes H_{\text{env}}] = 0$ and $[Q_{\text{env}}, H_{\text{env}}] = 0$ guarantee that the unitary $U$ and the projective measurement $\{Q_{\text{env}}, I_{\mathcal{H}_{\text{env}}}-Q_{\text{env}}\}$ are all energy-preserving for both the system and the environment.
%%%%%%%%%%%%%%%%%%%%%%%%%%%%%%%%%%%%%%%%%%%%%%%%%%%%%%%%%%%%%%%%%%%%%%%%%%%

Formally, we implement this optimization hierarchy via a two-stage maximization. First, we determine the maximum average fidelity across the entire space of universal energy-preserving purification protocols, $\text{UEPP}(\mathcal{N}, H, n)$:
\begin{equation}
    F_{\text{max}}(\mathcal{N}, H, n) \coloneqq \max_{\mathcal{E} \in \text{UEPP}(\mathcal{N}, H, n)} \bar{F}(\mathcal{N}, \mathcal{E}, n).
\end{equation}
The protocols achieving this maximum constitute the optimal fidelity set $\text{UEPP}^*(\mathcal{N}, H, n)\subseteq\text{UEPP}(\mathcal{N}, H, n)$:
\begin{equation}
\text{UEPP}^*(\mathcal{N}, H, n)\coloneqq\{\cE\in\text{UEPP}(\cN,H,n):\bar{F}(\mathcal{N}, \mathcal{E}, n)=F_{\text{max}}(\mathcal{N}, H, n)\}.
\end{equation}
Crucially, this set is not a singleton; it contains protocols that yield the identical global maximum fidelity but possess varying average purification success probabilities. Consequently, the second step identifies the optimal protocol $\mathcal{E}^*$ by maximizing the average success probability within this subset:
\begin{equation}
p_{\text{succ}}^{\text{max}}(\mathcal{N}, H, n) \coloneqq \max_{\mathcal{E} \in \text{UEPP}^*(\mathcal{N}, H, n)} \bar{p}(\mathcal{N}, \mathcal{E}, n).
\end{equation}
With the optimization framework established, the main goal is to determine the optimal protocol $\cE^*$ and access its performance, $F_{\text{max}}(\mathcal{N}, H, n)$ and $p_{\text{succ}}^{\text{max}}(\mathcal{N}, H, n)$. We begin by analytically deriving the maximum average fidelity.

\begin{theorem}
\label{thm2}
Let $\cN$ satisfy Eq.~\eqref{eq:noise-condition}, and let $H$ be a given Hamiltonian. Then the maximum average fidelity of $n\rightarrow 1$ universal energy-preserving purification protocols is given by
\begin{equation}
    F_{\text{max}}(\mathcal{N}, H, n) = \left\lVert C^{-\frac{1}{2}} A C^{-\frac{1}{2}} \right\rVert_\infty,
\end{equation}
where $A$ and $C$ are defined as in Theorem \ref{thm1} encapsulating the information of $(\mathcal{N}, H, n)$, and $\left\lVert \cdot \right\rVert_\infty$ is the spectral norm.
\end{theorem}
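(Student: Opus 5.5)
We reduce the problem to a generalized Rayleigh quotient over Choi operators. Let $\Gamma_\Lambda$ be the Choi operator of $\Lambda$ on $\mathcal{H}^{\otimes n}\otimes\mathcal{H}^{\otimes n}$, with input systems listed first; $\cE=\Tr_{2\cdots n}\circ\Lambda$.

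\emph{Step 1: rewrite $\bar p$ and $\bar F$ as traces.} Write $\cE(\rho)=\Tr_{\rm in}[(\rho^T\otimes I)\Gamma_\Lambda]$ with the output partial trace included, and use the linearity of the Haar integral. This gives
\[
\bar p=\Tr[(\cN^{\otimes n}(s_n)^T\otimes I_{\mathcal{H}^{\otimes n}})\Gamma_\Lambda].
\]
For the numerator, combine the target copy $\psi$ with the $n$ noisy copies into $\int d\psi\,\cN(\psi)^{\otimes n}\otimes\psi=(\cN^{\otimes n}\otimes id)(s_{n+1})$. After partially transposing the input copies and tensoring $I$ on the discarded outputs $2,\dots,n$ (after suitable reordering), this gives $\int d\psi\,\Tr[\psi\,\cE(\cN(\psi)^{\otimes n})]=\Tr[A\Gamma_\Lambda]$.

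By Lemma~\ref{lem1} (Choi form), $\Lambda\in\EPO(H)$ if and only if $\Gamma_\Lambda=\Pi\Gamma_\Lambda\Pi$. Hence $\Tr[A\Gamma_\Lambda]=\Tr[\Pi A\Pi\,\Gamma_\Lambda]$, and similarly for the denominator, whose compressed operator is $C$. Note that $C$ is regarded as an operator on $\operatorname{supp}\Pi$, and inverse square roots are taken there. $C$ is invertible on that support because of Eq.~\eqref{eq:noise-condition}: $\cN^{\otimes n}(s_n)>0$, so $\Pi(\cdot\otimes I)\Pi\ge\lambda_{\min}\Pi$. Inside $\operatorname{supp}\Pi$ one may replace $A$ by $\Pi A\Pi$. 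Thus
\[
\bar F=\frac{\Tr[A\Gamma]}{\Tr[C\Gamma]}.
\]

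\emph{Step 2: upper bound.} Set $X=C^{1/2}\Gamma C^{1/2}\ge0$. Then
\[
\bar F=\frac{\Tr[C^{-1/2}AC^{-1/2}X]}{\Tr X}\le\lambda_{\max}\bigl(C^{-1/2}AC^{-1/2}\bigr).
\]
Since $A\ge0$, this $\lambda_{\max}$ equals $\lVert C^{-1/2}AC^{-1/2}\rVert_\infty$. The bound holds for all energy-preserving $\Lambda$, and in particular over UEPP.

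\emph{Step 3: achievability.} Take a unit eigenvector $\vket{v}$ with the top eigenvalue, supported on $\operatorname{supp}\Pi$. Set $\Gamma^*=\epsilon\, C^{-1/2}\vket{v}\vbra{v}C^{-1/2}$, which satisfies $\Pi\Gamma^*\Pi=\Gamma^*$. We need $\Gamma^*$ to be the Choi operator of a trace-nonincreasing operation in some energy-preserving instrument. Choose $\epsilon>0$ small enough that $\Tr_{\rm out}\Gamma^*\le I$. Then complete with a second operation, for instance with Kraus operators $\sqrt{I-\Tr_{\rm out}\Gamma^{*T}}$ refined by the spectral projectors $P_E$.

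The main obstacle is this completion step: showing that the complement can be chosen energy-preserving. The route we would try is to note that Kraus operators of $\Gamma^*$ commute with $H$ (Lemma~\ref{lem1}, Kraus form), so $M=\sum_k M_k^\dagger M_k$ commutes with $H$. Then $\sqrt{I-\epsilon M}$ also commutes with $H$, and $\{\Gamma^*,\ \sqrt{I-\epsilon M}\,\cdot\,\sqrt{I-\epsilon M}\}$ is an energy-preserving instrument.

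The remaining check is that this protocol lies in UEPP. Its fidelity equals $\lambda_{\max}$ and its success probability is $\epsilon\,\vbra{v}v\rangle\!\rangle>0$. It beats $F_{\rm raw}$ whenever UEPP is nonempty, because the identity protocol is feasible and has fidelity $F_{\rm raw}$, which is $\le\lambda_{\max}$. This connects to Theorem~\ref{thm1}: when UEPP is nonempty some protocol exceeds $F_{\rm raw}$, so $\lambda_{\max}>F_{\rm raw}$. Hence the maximum over UEPP equals $\lVert C^{-1/2}AC^{-1/2}\rVert_\infty$.
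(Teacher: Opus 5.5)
Your proposal is correct and follows essentially the same route as the paper (Appendices B and C). Like the paper, you rewrite $\bar F$ as $\Tr(\Gamma_\Lambda A)/\Tr(\Gamma_\Lambda C)$ using $\Pi\Gamma_\Lambda\Pi=\Gamma_\Lambda$, bound it by the top eigenvalue of $C^{-1/2}AC^{-1/2}$ via $C^{1/2}\Gamma_\Lambda C^{1/2}$, and attain that eigenvalue with $\Gamma=q\,C^{-1/2}\sigma C^{-1/2}$ for $\sigma$ on the top eigenspace. The completion step you flag as the main obstacle is already settled by the Choi criterion of Lemma~\ref{lem1}; your explicit complement is just the paper's Lemma~\ref{lem2}, up to the bookkeeping of $\epsilon$ in $\sqrt{I-\epsilon M}$. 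Your check that the optimizer beats $F_{\rm raw}$ whenever UEPP is nonempty is a point the paper leaves implicit.
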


Theorem~\ref{thm2} identifies the fundamental limit of purification fidelity attainable under strict energy conservation, which is directly parametrized by the noise and energy constraint. Physically, this analytical maximum quantifies the ultimate reversibility of the noise process when no external work is permitted. Beyond the fidelity value, our analysis indeed enables a complete description of all energy-preserving purification protocols (Appendix~\ref{supp_sec:C}).

% In  we generalize Theorem~\ref{thm2}, which explicitly parametrizes the entire set $\text{UEPP}(\mathcal{N}, H, n)$ and specifically the optimal fidelity set $\text{UEPP}^*(\mathcal{N}, H, n)$, thereby identifying all protocols that achieve the optimal fidelity.

After identifying the optimal fidelity, the next step is to find the protocol within the optimal fidelity set that maximizes the purification yield. This search can be formulated as the following solvable semidefinite program (SDP).

\begin{proposition}
\label{pro3}
Let $\cN$ satisfy Eq.~\eqref{eq:noise-condition}, and let $H$ be a given Hamiltonian. Then the maximum average success probability of $n\rightarrow 1$ universal energy-preserving purification protocols can be computed via the following semidefinite program (SDP):
\begin{align}
p_{\text{succ}}^{\text{max}}(\mathcal{N}, H, n)=\textrm{maximize} \quad & \operatorname{Tr}(\Gamma_{\Lambda} C) \label{eq:pro 3-1}\\
\textrm{subject to} \quad & \Gamma_{\Lambda} \geq 0, \label{eq:pro 3-2}\\
& \operatorname{Tr}_{\text{out}}(\Gamma_{\Lambda}) \leq I_{\mathcal{H}^{\otimes n}}, \label{eq:pro 3-3}\\
 C^{-\frac{1}{2}} P_m C^{\frac{1}{2}} \cdot &\Gamma_{\Lambda} \cdot C^{\frac{1}{2}} P_m C^{-\frac{1}{2}} = \Gamma_{\Lambda}, \label{eq:pro 3-4}
\end{align}
where $C$ and $P_m$ are defined as in Theorem \ref{thm1}, $\operatorname{Tr}_{\text{out}}$ is the partial trace over the output system. Moreover, the solution $\Gamma_{\Lambda^*}$ of the SDP represents the Choi operator of $\Lambda^*$, yielding the optimal protocol $\cE^*=\operatorname{Tr}_{2\cdots n} \circ \Lambda^*$ that captures both $F_{\text{max}}(\mathcal{N}, H, n)$ and $p_{\text{succ}}^{\text{max}}(\mathcal{N}, H, n)$.
\end{proposition}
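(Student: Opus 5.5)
We plan to reduce Proposition~\ref{pro3} to the same Choi-operator reformulation that underlies Theorems~\ref{thm1} and~\ref{thm2}. The first step is to write both figures of merit linearly in the Choi operator $\Gamma_\Lambda$ of the pre-discarding operation. Using the Haar integrals $\int d\psi\,\psi^{\otimes n}=s_n$ and $\int d\psi\,\psi^{\otimes(n+1)}=s_{n+1}$, and tracing out the discarded outputs $2,\dots,n$ (which produces the factor $I_{\mathcal H^{\otimes(n-1)}}$ in $A$), we obtain
\[
\bar p(\cN,\cE,n)=\operatorname{Tr}\bigl[\Gamma_\Lambda\,(\cN^{\otimes n}(s_n))^{T}\otimes I\bigr], \qquad \bar p\,\bar F=\operatorname{Tr}[\Gamma_\Lambda A].
\]
By the Choi characterization in Lemma~\ref{lem1}, $\Lambda\in\EPO(H)$ forces $\Gamma_\Lambda=\Pi\Gamma_\Lambda\Pi$. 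We may therefore replace the middle operator by its compression, so that $\bar p=\operatorname{Tr}[\Gamma_\Lambda C]$. The same compression turns $A$ into $\Pi A\Pi$, and we take $A$ to be understood in this compressed form, as in Theorem~\ref{thm1}. The feasible set of energy-preserving operations is exactly $\{\Gamma_\Lambda\ge0,\ \operatorname{Tr}_{\rm out}\Gamma_\Lambda\le I,\ \Gamma_\Lambda=\Pi\Gamma_\Lambda\Pi\}$. A trace-nonincreasing $\Lambda$ whose Choi operator is $\Pi$-invariant can be completed to an energy-preserving instrument by adding a $\Pi$-invariant complement, for instance Kraus operators $\sqrt{I-\operatorname{Tr}_{\rm out}\Gamma}$ restricted blockwise. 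Condition~\eqref{eq:noise-condition} makes $C$ invertible on the range of $\Pi$, and all inverses are taken on that support.

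Next we substitute $X=C^{1/2}\Gamma_\Lambda C^{1/2}\ge0$. This gives $\bar F=\operatorname{Tr}[X\,C^{-1/2}AC^{-1/2}]/\operatorname{Tr}X$, a Rayleigh-type quotient. It is bounded by $\lambda_{\max}=\lVert C^{-1/2}AC^{-1/2}\rVert_\infty$ (Theorem~\ref{thm2}), with equality if and only if the support of $X$ lies in the top eigenspace, that is, $P_mXP_m=X$. This is the standard fact that $\operatorname{Tr}[X(\lambda_{\max}-M)]=0$ with both factors positive semidefinite forces $X(\lambda_{\max}-M)=0$. Translating back through $\Gamma_\Lambda=C^{-1/2}XC^{-1/2}$ gives exactly constraint~\eqref{eq:pro 3-4}. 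Constraint~\eqref{eq:pro 3-4} also implies $\Pi$-invariance, because $C^{\pm1/2}$ live on the range of $\Pi$. Hence $\mathrm{UEPP}^*$ corresponds precisely to the feasible set of the SDP, with one exception: the positivity requirement $\bar p>0$ and the strict improvement $F_{\max}>F_{\rm raw}$ from Definition~\ref{def1}.

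We then dispose of these edge cases. Since $C>0$ on its support, any nonzero feasible $\Gamma_\Lambda$ has $\operatorname{Tr}[\Gamma_\Lambda C]>0$, so maximizing $\operatorname{Tr}[\Gamma_\Lambda C]$ automatically excludes the zero operator. When $F_{\max}>F_{\rm raw}$, every nonzero feasible point lies in $\mathrm{UEPP}^*$. When $F_{\max}=F_{\rm raw}$, Theorem~\ref{thm1} applies: no universal protocol exists, and the SDP value should be read as the yield at the golden point, which the identity attains. We will state this convention explicitly. With these points settled, the maximum of the linear objective over the resulting compact convex set equals $p^{\max}_{\rm succ}$, and any optimizer $\Gamma_{\Lambda^*}$ yields $\cE^*=\operatorname{Tr}_{2\cdots n}\circ\Lambda^*$ achieving both $F_{\max}$ and $p^{\max}_{\rm succ}$.

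We expect the main obstacle to be the careful bookkeeping of supports. $C$ is only invertible on the range of $\Pi$, so we must check that $P_m$ and the equality conditions are taken within that subspace. We also need condition~\eqref{eq:noise-condition} to guarantee that the restriction of $C$ there is strictly positive, since this is what makes the map $\Gamma\mapsto X$ a bijection between the two feasible cones.
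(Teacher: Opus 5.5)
Your proposal is correct and follows essentially the same route as the paper. In both, $\bar p$ and $\bar p\bar F$ are rewritten as $\operatorname{Tr}(\Gamma_\Lambda C)$ and $\operatorname{Tr}(\Gamma_\Lambda A)$ using $\Gamma_\Lambda=\Pi\Gamma_\Lambda\Pi$; the substitution $C^{1/2}\Gamma_\Lambda C^{1/2}$ then turns fidelity-optimality into support in the range of $P_m$, equivalently \eqref{eq:pro 3-4}; and the optimal-fidelity protocols are matched with the nonzero SDP-feasible points in both directions.

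Two of your additions are genuine refinements. You treat the no-go case $\lVert C^{-1/2}AC^{-1/2}\rVert_\infty=F_{\rm raw}$ explicitly; there $\mathrm{UEPP}^*$ is empty, and the paper's step ``hence $\cE\in\mathrm{UEPP}^*$'' silently fails. You also argue attainment by compactness. The only slip is the completing Kraus operator: in the paper's Choi convention it should be $\sqrt{I-\sum_k M_k^\dagger M_k}=\sqrt{I-(\operatorname{Tr}_{\rm out}\Gamma_\Lambda)^{T}}$. This is harmless, because the Choi criterion of Lemma~\ref{lem1} already gives membership in $\EPO(H)$.
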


\begin{proof}
(i). First we show that for any $\cE\in\text{UEPP}^*(\cN, H, n)$, there exists $\Gamma_\Lambda$ with Eq.~\eqref{eq:pro 3-2} - Eq.~\eqref{eq:pro 3-4}, such that $\Bar{p}(\cN,\cE,n)=\operatorname{Tr}(\Gamma_\Lambda  C)$: Write $\cE = \operatorname{Tr}_{2\cdots n} \circ \Lambda$ for some $\Lambda\in\text{EPO}(H)$, denote the Choi operator of $\Lambda$ as $\Gamma_\Lambda$. Then it is clear that $\Gamma_\Lambda$ satisfies Eq.~\eqref{eq:pro 3-2} - Eq.~\eqref{eq:pro 3-3}. Note that by the proof of Theorem~\ref{thm1} and Theorem~\ref{thm6}, we can express $\Bar{p}(\cN,\cE,n)$ as
\begin{equation}
\Bar{p}(\cN,\cE,n)=\int \text{d}\psi \, \operatorname{Tr}[\mathcal{E}(\mathcal{N}(\psi)^{\otimes n})]=\operatorname{Tr}(\Gamma_\Lambda  C)> 0.
\end{equation}
Similarly, the average fidelity can be written as 
\begin{align}
\Bar{F}(\cN,\cE,n)&=\frac{\int \text{d}\psi \, F(\cE(\cN(\psi)^{\otimes n}), \psi)}{\Bar{p}(\cN,\cE,n)}\\
&=\frac{\operatorname{Tr}(\Gamma_\Lambda\cdot A)}{\operatorname{Tr}(\Gamma_\Lambda\cdot C)}\\
&=\operatorname{Tr}[\frac{C^{\frac{1}{2}}\cdot\Gamma_\Lambda\cdot C^{\frac{1}{2}}}{\operatorname{Tr}(\Gamma_\Lambda\cdot C)}\cdot C^{-\frac{1}{2}} A C^{-\frac{1}{2}}].
\end{align}
Then by the condition that $\cE\in\text{UEPP}^*(\cN, H, n)$ and $F_{\text{max}}(\mathcal{N}, H, n) = \left\lVert C^{-\frac{1}{2}} A C^{-\frac{1}{2}} \right\rVert_\infty$ (see Theorem~\ref{thm2}), the state $\frac{C^{\frac{1}{2}}\cdot\Gamma_\Lambda\cdot C^{\frac{1}{2}}}{\operatorname{Tr}(\Gamma_\Lambda\cdot C)}$ must be supported in the eigenspace of $C^{-\frac{1}{2}} A C^{-\frac{1}{2}}$ with maximum eigenvalue, that is,
\begin{equation}
P_m\frac{C^{\frac{1}{2}}\cdot\Gamma_{\Lambda}\cdot C^{\frac{1}{2}}}{\operatorname{Tr}(\Gamma_{\Lambda}\cdot C)}P_m=\frac{C^{\frac{1}{2}}\cdot\Gamma_{\Lambda}\cdot C^{\frac{1}{2}}}{\operatorname{Tr}(\Gamma_{\Lambda}\cdot C)},
\end{equation}
which is equivalent to Eq.~\eqref{eq:pro 3-4} for $C^{-\frac{1}{2}}\cdot C^{\frac{1}{2}}=C^{\frac{1}{2}}\cdot C^{-\frac{1}{2}}=\Pi$ and $\Pi\cdot\Gamma_\Lambda\cdot\Pi=\Gamma_\Lambda$, with $\Pi = \sum_E P_E^T \otimes P_E$ constructed from the spectral decomposition $H=\sum_E E P_E$. Thus we have $p_{\text{succ}}^{\text{max}}(\mathcal{N}, H, n)\leq\max\limits_{\Gamma_\Lambda}\operatorname{Tr}(\Gamma_\Lambda  C)$, where the maximization is among all $\Gamma_\Lambda$ with conditions Eq.~\eqref{eq:pro 3-2} - Eq.~\eqref{eq:pro 3-4};
\\
\\
(ii). Now we show that for any $\Gamma_\Lambda$ with Eq.~\eqref{eq:pro 3-2} - Eq.~\eqref{eq:pro 3-4}, we must have $\operatorname{Tr}(\Gamma_\Lambda  C)\leq p_{\text{succ}}^{\text{max}}(\mathcal{N}, H, n)$: Note that any $\Gamma_\Lambda$ with Eq.~\eqref{eq:pro 3-2} - Eq.~\eqref{eq:pro 3-4} must be a Choi operator corresponding to $\Lambda\in\text{EPO}(H)$, for 
\begin{equation}
\Pi\cdot\Gamma_\Lambda\cdot\Pi=\Pi\cdot C^{-\frac{1}{2}} P_m C^{\frac{1}{2}} \cdot \Gamma_{\Lambda} \cdot C^{\frac{1}{2}} P_m C^{-\frac{1}{2}}\cdot\Pi=C^{-\frac{1}{2}} P_m C^{\frac{1}{2}} \cdot \Gamma_{\Lambda} \cdot C^{\frac{1}{2}} P_m C^{-\frac{1}{2}}=\Gamma_\Lambda.
\end{equation}
Furthermore, we assume that $\operatorname{Tr}(\Gamma_\Lambda  C)>0$, since $\operatorname{Tr}(\Gamma_\Lambda  C)\leq p_{\text{succ}}^{\text{max}}(\mathcal{N}, H, n)$ if $\operatorname{Tr}(\Gamma_\Lambda  C)=0$. Consider $\cE \coloneqq \operatorname{Tr}_{2\cdots n} \circ \Lambda$, then clearly we have $\int \text{d}\psi \, \operatorname{Tr}[\mathcal{E}(\mathcal{N}(\psi)^{\otimes n})]= \operatorname{Tr}(\Gamma_\Lambda  C)>0$. Write
\begin{equation}
\frac{\int \text{d}\psi \, F(\cE(\cN(\psi)^{\otimes n}), \psi)}{\int \text{d}\psi \, \operatorname{Tr}[\mathcal{E}(\mathcal{N}(\psi)^{\otimes n})]}=\operatorname{Tr}[\frac{C^{\frac{1}{2}}\cdot\Gamma_\Lambda\cdot C^{\frac{1}{2}}}{\operatorname{Tr}(\Gamma_\Lambda\cdot C)}\cdot C^{-\frac{1}{2}} A C^{-\frac{1}{2}}].
\end{equation}
Note that $\Gamma_\Lambda$ satisfies the condition Eq.~\eqref{eq:pro 3-4}, i.e., $P_m\frac{C^{\frac{1}{2}}\cdot\Gamma_\Lambda\cdot C^{\frac{1}{2}}}{\operatorname{Tr}(\Gamma_\Lambda\cdot C)}P_m=\frac{C^{\frac{1}{2}}\cdot\Gamma_\Lambda\cdot C^{\frac{1}{2}}}{\operatorname{Tr}(\Gamma_\Lambda\cdot C)}$. This implies that 
\begin{equation}
\frac{\int \text{d}\psi \, F(\cE(\cN(\psi)^{\otimes n}), \psi)}{\int \text{d}\psi \, \operatorname{Tr}[\mathcal{E}(\mathcal{N}(\psi)^{\otimes n})]}=\operatorname{Tr}[\frac{C^{\frac{1}{2}}\cdot\Gamma_\Lambda\cdot C^{\frac{1}{2}}}{\operatorname{Tr}(\Gamma_\Lambda\cdot C)}\cdot C^{-\frac{1}{2}} A C^{-\frac{1}{2}}] = \left\lVert C^{-\frac{1}{2}} A C^{-\frac{1}{2}} \right\rVert_\infty=F_{\text{max}}(\mathcal{N}, H, n),
\end{equation}
hence $\cE\in\text{UEPP}^*(\cN, H, n)$ and we have $\operatorname{Tr}(\Gamma_\Lambda  C)=\Bar{p}(\cN,\cE,n)\leq p_{\text{succ}}^{\text{max}}(\mathcal{N}, H, n)$.
\\
\\
By the discussion before, we can compute $p_{\text{succ}}^{\text{max}}(\mathcal{N}, H, n)$ by the SDP Eq.~\eqref{eq:pro 3-1} - Eq.~\eqref{eq:pro 3-4}. Denote the solution of the SDP as $\Gamma_{\Lambda^*}$, then it represents a Choi operator of $\Lambda^*\in\text{EPO}(H)$, and $\cE^* = \operatorname{Tr}_{2\cdots n} \circ \Lambda^*$ is the optimal universal energy-preserving purification protocol for delivering both $F_{\text{max}}(\mathcal{N}, H, n)$ and $p_{\text{succ}}^{\text{max}}(\mathcal{N}, H, n)$.
\end{proof}

Proposition~\ref{pro3} recasts the second stage of the optimization as a finite-dimensional SDP, yielding both the maximum average success probability among fidelity-maximizing protocols and the Choi operator of an optimal protocol. Although constructive, this formulation is not generally efficient for large $n$, because the dimension grows exponentially with the number of copies. This motivates identifying cases in which the optimal purification can be obtained analytically.
% The proof of Proposition~\ref{pro3} is provided in Appendix~\ref{supp_sec:D}.

\begin{corollary}\label{cor:rank-one}
If $P_m$ has rank one for its setting $(\mathcal{N},H,n)$, then the optimal $n\rightarrow 1$ universal energy-preserving purification protocol $\cE^* = \operatorname{Tr}_{2\cdots n} \circ \Lambda^*$ is characterized by $\Lambda^*$ with Choi operator
\begin{equation}
\Gamma_{\Lambda^*}=\frac{C^{-1/2}P_mC^{-1/2}}{ \|\operatorname{Tr}_{\rm out}C^{-1/2}P_mC^{-1/2}\|_\infty}.
\end{equation}
And the corresponding maximum average success probability is
\begin{equation}
p_{\text{succ}}^{\text{max}}(\mathcal{N}, H, n)=\|\operatorname{Tr}_{\rm out}C^{-1/2}P_mC^{-1/2}\|_\infty^{-1}.
\end{equation}
\end{corollary}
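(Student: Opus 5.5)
The proof will start from the SDP of Proposition~\ref{pro3} and solve it in closed form using the rank-one assumption. Write $P_m=\ketbra{v}{v}$ for a unit vector $\ket{v}$; by construction $\ket{v}$ lies in the range of $\Pi$, since $C^{-1/2}AC^{-1/2}$ is supported there. The first step is to characterize the feasible set. Constraint Eq.~\eqref{eq:pro 3-4} says $\Gamma_\Lambda=C^{-1/2}P_mC^{1/2}\,\Gamma_\Lambda\, C^{1/2}P_mC^{-1/2}$. Substituting $P_m=\ketbra{v}{v}$ gives
\[
\Gamma_\Lambda=\langle v|C^{1/2}\Gamma_\Lambda C^{1/2}|v\rangle\; C^{-1/2}\ketbra{v}{v}C^{-1/2}=\lambda\, C^{-1/2}P_mC^{-1/2}
\]
for a scalar $\lambda\ge0$, where nonnegativity follows from $\Gamma_\Lambda\ge0$. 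Conversely, every operator $\lambda\, C^{-1/2}P_mC^{-1/2}$ with $\lambda\ge0$ satisfies Eq.~\eqref{eq:pro 3-4}. To see this, use $C^{1/2}C^{-1/2}=\Pi$ and $\Pi\ket{v}=\ket{v}$, so the sandwich returns $P_m$ inside. Such operators are also manifestly positive. The feasible set is therefore the one-parameter ray $\{\lambda\, C^{-1/2}P_mC^{-1/2}:\lambda\ge0\}$.

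The second step imposes Eq.~\eqref{eq:pro 3-3}. The condition $\lambda\operatorname{Tr}_{\rm out}(C^{-1/2}P_mC^{-1/2})\le I$ holds if and only if $\lambda\le \|\operatorname{Tr}_{\rm out}C^{-1/2}P_mC^{-1/2}\|_\infty^{-1}$, because the partial trace is positive. The objective evaluates to
\[
\operatorname{Tr}(\Gamma_\Lambda C)=\lambda\operatorname{Tr}\bigl(P_mC^{-1/2}CC^{-1/2}\bigr)=\lambda\operatorname{Tr}(P_m\Pi)=\lambda .
\]
The maximum is thus attained at the largest admissible $\lambda$. This yields the stated Choi operator $\Gamma_{\Lambda^*}$ and the value $p_{\rm succ}^{\max}=\|\operatorname{Tr}_{\rm out}C^{-1/2}P_mC^{-1/2}\|_\infty^{-1}$. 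By the last part of Proposition~\ref{pro3}, this maximizer defines the optimal protocol.

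The main obstacle is a matter of rigor. Here $C^{-1/2}$ has to be read as the pseudo-inverse on the support $\Pi$. We must check that $C^{1/2}C^{-1/2}=\Pi$, and that $P_m$, being an eigenprojector of an operator supported in $\Pi$ with positive top eigenvalue, satisfies $\Pi P_m=P_m$. A top eigenvalue of zero would only be possible in a degenerate situation, and that is excluded because $F_{\max}>0$ whenever any EPO protocol succeeds. Once this support bookkeeping is settled, the computation above is exact. A second point to verify is that the denominator is nonzero: $\operatorname{Tr}_{\rm out}C^{-1/2}P_mC^{-1/2}\neq0$ because this operator is positive and has trace $\operatorname{Tr}(C^{-1}P_m)>0$.
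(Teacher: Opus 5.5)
Your proposal is correct and follows essentially the same route as the paper. The rank-one $P_m$ collapses the fixed-point constraint of Proposition~\ref{pro3} to the ray $\Gamma_\Lambda=\lambda\,C^{-1/2}P_mC^{-1/2}$ with $\lambda\ge 0$; the trace-non-increasing condition caps $\lambda$ at $\|\operatorname{Tr}_{\rm out}C^{-1/2}P_mC^{-1/2}\|_\infty^{-1}$; and the objective $\operatorname{Tr}(\Gamma_\Lambda C)$ equals $\lambda$. Your extra checks (that every point of the ray is feasible, and that $P_m$ lies in the support of $\Pi$ because the top eigenvalue is positive) are details the paper leaves implicit, and they do not change the argument.
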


\begin{proof}
For every $\Gamma_\Lambda$ that delivers the maximum average fidelity, Eq.~\eqref{eq:pro 3-4} leads to
\begin{equation}
P_m C^{1/2}\Gamma_\Lambda C^{1/2}P_m=C^{1/2}\Gamma_\Lambda C^{1/2}.
\end{equation}
Since $P_m$ has rank one, we must have $C^{1/2}\Gamma_\Lambda C^{1/2}=qP_m$ for some scalar $q\geq0$. Using $C^{-1/2}C^{1/2}=\Pi$ and $\Pi\Gamma_\Lambda\Pi=\Gamma_\Lambda$ gives
\begin{equation}
\Gamma_\Lambda=qC^{-1/2}P_mC^{-1/2}.
\end{equation}
Then the trace non-increasing condition is equivalent to
\begin{equation}
q\,\operatorname{Tr}_{\rm out}(C^{-1/2}P_mC^{-1/2})\leq I,
\end{equation}
which is precisely $q\leq\|\operatorname{Tr}_{\rm out}(C^{-1/2}P_mC^{-1/2})\|_\infty^{-1}$. Moreover, the average success probability of $\Lambda$ reads $\operatorname{Tr}(\Gamma_\Lambda C)=q\operatorname{Tr}(P_m)=q$. Maximizing the average success probability $q$ therefore yields the two displayed formulas.
\end{proof}

Corollary~\ref{cor:rank-one} provides a simplification of Proposition~\ref{pro3} in a specific case. When $P_m$ has rank one, the fidelity-maximizing Choi operator is fixed up to an overall scalar, whose largest admissible value is determined by the trace non-increasing condition. The optimal protocol and its maximum average success probability therefore follow in closed form, without solving the second-stage SDP.

Remarkably, when the Hamiltonian $H$ is fully degenerate, $\text{EPO}(H)$ recovers all CPTN. In this case, the energy-preserving restriction vanishes, and the energy-preserving purification scenario reduces to the ordinary one without resource constraints, which is studied in Ref.~\cite{fiuravsek2004optimal}. All results presented here therefore apply directly to that ordinary setting by specifying $H$ to be fully degenerate. For example, Theorem \ref{thm2} reduces to the result of Ref.~\cite{fiuravsek2004optimal} when $H=I_{\mathcal{H}^{\otimes n}}$. 

\section{Implementation of energy-preserving purification}
While Proposition~\ref{pro3} yields the Choi characterization of the optimal purification operation $\Lambda^*$, its physical realization demands a concrete energy-free unitary transformation and measurement. Given that $\Lambda^*$ represents only the successful branch of the probabilistic scheme, a zero energy cost implementation, however, requires a complete energy-preserving instrument. We therefore construct a complementary failure operation $\Lambda^\#$ such that $\Lambda^*+\Lambda^\#$ is an energy-preserving channel.

%While Proposition~\ref{pro3} yields the Choi characterization of the optimal successful operation $\Lambda^*$, a physical realization requires a complete energy-preserving instrument. We therefore construct a complementary failure operation $\Lambda^\#$ such that $\Lambda^*+\Lambda^\#$ is an energy-preserving channel.

\begin{lemma}
\label{lem2}
Let $H$ be a Hamiltonian and let $\Lambda^*\in\mathrm{EPO}(H)$ have Kraus operators $\{M_k\}_k$ satisfying $[M_k,H]=0$. Define
\begin{equation}
R:=I-\sum_k M_k^\dagger M_k,
\qquad
\Lambda^\#(\rho):=R^{1/2}\rho R^{1/2}.
\end{equation}
Then $\Lambda^\#\in\mathrm{EPO}(H)$ and $\Lambda^*+\Lambda^\#$ is an energy-preserving channel for $H$.
\end{lemma}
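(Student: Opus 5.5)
The plan is to use the Kraus characterization in Lemma~\ref{lem1}. Energy preservation of an operation is then equivalent to the existence of Kraus operators commuting with $H$, and energy preservation of a channel for $H$ reduces, via the definition of $\mathrm{EPO}(H)$, to the sum of the instrument elements being trace preserving with $H$-commuting Kraus operators. So three things must be checked: that $R$ is positive semidefinite (so that $R^{1/2}$ exists), that $R^{1/2}$ commutes with $H$, and that $\Lambda^*+\Lambda^\#$ is a CPTP map with a full set of $H$-commuting Kraus operators.

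First, positivity of $R$. Since $\Lambda^*$ is trace non-increasing, $\sum_k M_k^\dagger M_k\le I$, hence $R=I-\sum_k M_k^\dagger M_k\ge 0$ and $R^{1/2}$ is well defined and Hermitian.

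Second, commutation. From $[M_k,H]=0$ and the Hermiticity of $H$, taking adjoints gives $[M_k^\dagger,H]=0$, so $[M_k^\dagger M_k,H]=0$ and therefore $[R,H]=0$. To pass to the square root, I note that $R^{1/2}$ is a polynomial in $R$: choose a polynomial $f$ interpolating $\sqrt{\lambda}$ on the finite spectrum of $R$, so that $R^{1/2}=f(R)$. It follows that $[R^{1/2},H]=0$. Equivalently, $R$ and $H$ share the spectral projectors $P_E$, so $R$ is block diagonal with respect to $\{P_E\}$, and so is $R^{1/2}$. This is the only step with any content; I expect it to be routine.

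Third, the channel property. The map $\Lambda^*+\Lambda^\#$ has Kraus set $\{M_k\}_k\cup\{R^{1/2}\}$, and every element commutes with $H$. Completeness holds because $\sum_k M_k^\dagger M_k+R=I$, so the map is CPTP. By the Kraus condition in Lemma~\ref{lem1}, both $\Lambda^\#$ and $\Lambda^*+\Lambda^\#$ lie in $\mathrm{EPO}(H)$. For the channel $\Lambda^*+\Lambda^\#$ to be energy preserving in the sense of the definition, I verify $\Lambda^\dagger(H^n)=H^n$ directly:
\[
\sum_k M_k^\dagger H^n M_k+R^{1/2}H^nR^{1/2}=H^n\Big(\sum_k M_k^\dagger M_k+R\Big)=H^n.
\]
The two-element instrument $\{\Lambda^*,\Lambda^\#\}$ is therefore an energy-preserving instrument containing $\Lambda^\#$, and $\Lambda^\#\in\mathrm{EPO}(H)$ follows directly from the definition as well.
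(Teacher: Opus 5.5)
Your proof is correct and follows the paper's own argument: $R\ge 0$ from trace non-increase, $[R^{1/2},H]=0$ from $[M_k,H]=0$, then completeness $\sum_k M_k^\dagger M_k+R=I$ combined with the Kraus criterion of Lemma~\ref{lem1}. Two details you spell out go beyond the paper's text: the polynomial functional-calculus justification for $[R^{1/2},H]=0$, and the direct check $\sum_k M_k^\dagger H^n M_k+R^{1/2}H^nR^{1/2}=H^n$, where the paper simply asserts both.
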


\begin{proof}
Since $\Lambda^*$ is trace non-increasing, we have $R\geq0$. Furthermore, each $M_k$ commutes with $H$, so $[R,H]=[R^{1/2},H]=0$. The Kraus criterion in Lemma~\ref{lem1} therefore gives $\Lambda^\#\in\mathrm{EPO}(H)$. Finally,
\begin{equation}
\sum_kM_k^\dagger M_k+R=I,
\end{equation}
so $\Lambda^*+\Lambda^\#$ is trace preserving; all of its Kraus operators commute with $H$, and it is thus an energy-preserving channel.
\end{proof}

% The detailed construction of an energy-preserving instrument is presented in Appendix~\ref{supp_sec:E} for completeness.

% The following lemma explicitly constructs this complement to complete the energy-preserving instrument.

% \begin{lemma}
% \label{lem2}
% Let $H$ be a Hamiltonian, and let $\Lambda^* \in \text{EPO}(H)$. Fix an orthonormal eigenbasis $\{\ket{i}\}_i$ of $H$. Then the operator
% \[
% \Gamma_{\Lambda^\#} := \sum_i \ket{i}\bra{i} \otimes \rho_i,
% \]
% where
% \[
% \rho_i = 
% \begin{cases}
% \dfrac{\Lambda^*(\ket{i}\bra{i})}{\operatorname{Tr}[\Lambda^*(\ket{i}\bra{i})]} - \Lambda^*(\ket{i}\bra{i}), & \operatorname{Tr}[\Lambda^*(\ket{i}\bra{i})] \neq 0, \\ \\
% \ket{i}\bra{i}, & \text{otherwise},
% \end{cases}
% \]
% is the Choi operator of an operation $\Lambda^\# \in \text{EPO}(H)$ such that $\Lambda^* + \Lambda^\#$ is an energy-preserving channel for $H$.
% \end{lemma}

Once constructed energy-preserving instrument $\{\Lambda^*,\ \Lambda^\# \}$ from $\Gamma_{\Lambda^*}$, one can follow the routine in Ref.~\cite{chiribella2017optimal} to obtain the physical implementation, including the energy-free unitary transformation, and the energy-free projective measurements. The procedure is presented here for completeness.

First, without loss of generality let  $M_1,\cdots,M_r,M_{r+1},\cdots,M_m$ be Kraus operators of $\Lambda^*+\Lambda^\#$, where $M_1,\cdots,M_r$ and $M_{r+1},\cdots,M_m$ are Kraus operators of $\Lambda^*$ and $\Lambda^\#$, respectively. We have $[M_k,H]=0$ for all $k$.

Second, construct the environment's space $\mathcal{H}_{\text{env}}$ as $m$-dimensional Hilbert space and specify its orthonormal basis $\{\ket{\phi_1},\cdots,\ket{\phi_m}\}$. Set the environment's Hamiltonian $H_{\text{env}}$ as $I_{\mathcal{H}_{\text{env}}}$. Then $\ket{\phi_1}$ is naturally the ground state of $H_{\text{env}}$. Moreover, $[U,I\otimes H_{\text{env}}]=0$ for any unitary $U$ and $[Q_{\text{env}},H_{\text{env}}]=0$ for any projector $Q_{\text{env}}$. Finally, set the unitary and projective measurement. Let

\begin{equation}
H=\sum\limits_EE P_E=\sum\limits_{E}E\sum\limits_{\substack{i\\ P_E\ket{i}=\ket{i}}}\ket{i}\bra{i}
\end{equation} 
be the spectral decomposition of system's Hamiltonian $H$, where $E$ denotes eigenvalue of $H$, $P_E$ denotes the projector on the eigensubspace $S_E$ of $H$ with respect to $E$, $\{\ket{i}\}_i$ forms an orthonormal eigenbasis of $H$ and $\{\ket{i}:P_E\ket{i}=\ket{i}\}$ constitutes an orthonormal basis of $S_E$. Then

\begin{equation}
H\otimes I_{\mathcal{H}_{\text{env}}}=\sum\limits_{E}E\sum\limits_{\substack{i,k\\ P_E\ket{i}=\ket{i}}}\ket{i}\otimes\ket{\phi_k}\cdot\bra{i}\otimes\bra{\phi_k}
\end{equation}
is the spectral decomposition of $H\otimes I_{\mathcal{H}_{\text{env}}}$. Now $\{\ket{i}\otimes\ket{\phi_k}\}_{i,k}$ is orthonormal eigenbasis of $H\otimes I_{\mathcal{H}_{\text{env}}}$ and $B_E:=\{\ket{i}\otimes\ket{\phi_k}:P_E\ket{i}=\ket{i},k\}$ constitutes an orthonormal basis of the eigensubspace $S_E\otimes\mathcal{H}_{\text{env}}$ of $H\otimes I_{\mathcal{H}_{\text{env}}}$ with respect to $E$. Note that $\{\sum\limits_kM_k\ket{i}\otimes\ket{\phi_k}\}_i$ is an orthonormal set of eigenvectors of $H\otimes I_{\mathcal{H}_{\text{env}}}$ and $\{\sum\limits_kM_k\ket{i}\otimes\ket{\phi_k}:P_E\ket{i}=\ket{i}\}\subseteq S_E\otimes\mathcal{H}_{\text{env}}$ for each $E$. Then we extend $\{\sum\limits_kM_k\ket{i}\otimes\ket{\phi_k}:P_E\ket{i}=\ket{i}\}$ to an orthonormal basis $B_E'$ of $S_E\otimes\mathcal{H}_{\text{env}}$ for each $E$. Now we characterize the unitary $U$ by mapping $B_E$ to $B_E'$ bijectively for each $E$, while satisfying $U\ket{i}\otimes\ket{\phi_1}=\sum\limits_kM_k\ket{i}\otimes\ket{\phi_k}$ for any $i$. Then it is clear that $[U,H\otimes I_{\mathcal{H}_{\text{env}}}]=0$. Recall that $M_1,\cdots,M_r$ are Kraus operators of $\Lambda^*$, consider the projector $Q_{\text{env}}:=\sum\limits_{k=1}^{r}\ket{\phi_k}\bra{\phi_k}$ and the projective measurement $\{Q_{\text{env}},I-Q_{\text{env}}\}$, then we obtain
\begin{equation}
\Lambda^*(\rho)=\operatorname{Tr}_{\mathcal{H}_\text{env}}[I\otimes Q_{\text{env}}\cdot U(\rho\otimes\ket{\phi_1}\bra{\phi_1})U^\dagger]
\end{equation}
for any state $\rho$, which characterizes the physical implementation of $\Lambda^*$.

% The detailed procedures are presented in Appendix~\ref{supp_sec:F} for completeness. 

% With the Choi operator $\Gamma_{\Lambda^*}$ of the optimal energy-preserving purification operation solved by the SDP, our next goal is to specify the physical implementation of the  purification protocol $\Gamma_{\Lambda^*}$, which is operated without energy cost at each step and consists of the following three components: 
% \begin{itemize}
%     \item an environment's system $\mathcal{H}_{\text{env}}$ prepared in the ground state $\ket{\phi_1}$ of an environment's Hamiltonian $H_{\text{env}}$;
%     \item a system--environment unitary $U$ satisfying
%     $[U, H \otimes I_{\mathcal{H}_{\text{env}}}] = [U, I \otimes H_{\text{env}}] = 0$;
%     \item a projective measurement $\{Q_{\text{env}}, I - Q_{\text{env}}\}$ on the environment such that $[Q_{\text{env}}, H_{\text{env}}] = 0$.
% \end{itemize}
% These elements together realize the operation as
% \[
% \Lambda^*(\rho) = \operatorname{Tr}_{\mathcal{H}_\text{env}}\left[ (I \otimes Q_{\text{env}})\cdot \, U (\rho \otimes \ket{\phi_1}\bra{\phi_1}) U^\dagger \right], \quad \forall \rho.
% \]

\section{Numerical experiments}
Here, we conduct numerical experiments to demonstrate the effectiveness of the optimal universal energy-preserving purification protocol. Specifically, we consider the depolarizing noise $\cN$ with noise coefficients ranging from $\gamma\in[0.05,1]$, and the transverse-field
Ising Hamiltonian with open boundary conditions:
\begin{equation}
    H
    =J\sum_{i=1}^{N-1} Z_i Z_{i+1}
    +h\sum_{i=1}^{N} X_i,
\end{equation}
where $Z$ and $X$ refer to the Pauli-$Z$ and Pauli-$X$ matrices, respectively. $J$ and $h$ are the coupling
constants, which are set as $J=-1$ and $h=-1.5$. The optimal fidelity achieved by the protocol with respect to such a Hamiltonian $H$, is shown in Fig.~\ref{fig:results} (a), and the corresponding optimal success probability is illustrated in Fig.~\ref{fig:results} (b). In the numerical experiments, we consider the $2\rightarrow1$ and $3\rightarrow 1$ purification scenarios, and we found that the $3\rightarrow 1$ protocol achieves higher fidelity than the $2\rightarrow 1$ protocol. Such a result is intuitive as the $3\rightarrow 1$ protocol consumes more resources. Note that the success probability of $3\rightarrow 1$ protocol is lower than that of the $2\rightarrow 1$ protocol, which implies there is a trade-off between achievable fidelity and the success probability. In reality, we should choose the purification protocol carefully by considering both the fidelity and the success probability.

\begin{figure}[h]
    \centering 
    \includegraphics[width=\textwidth]{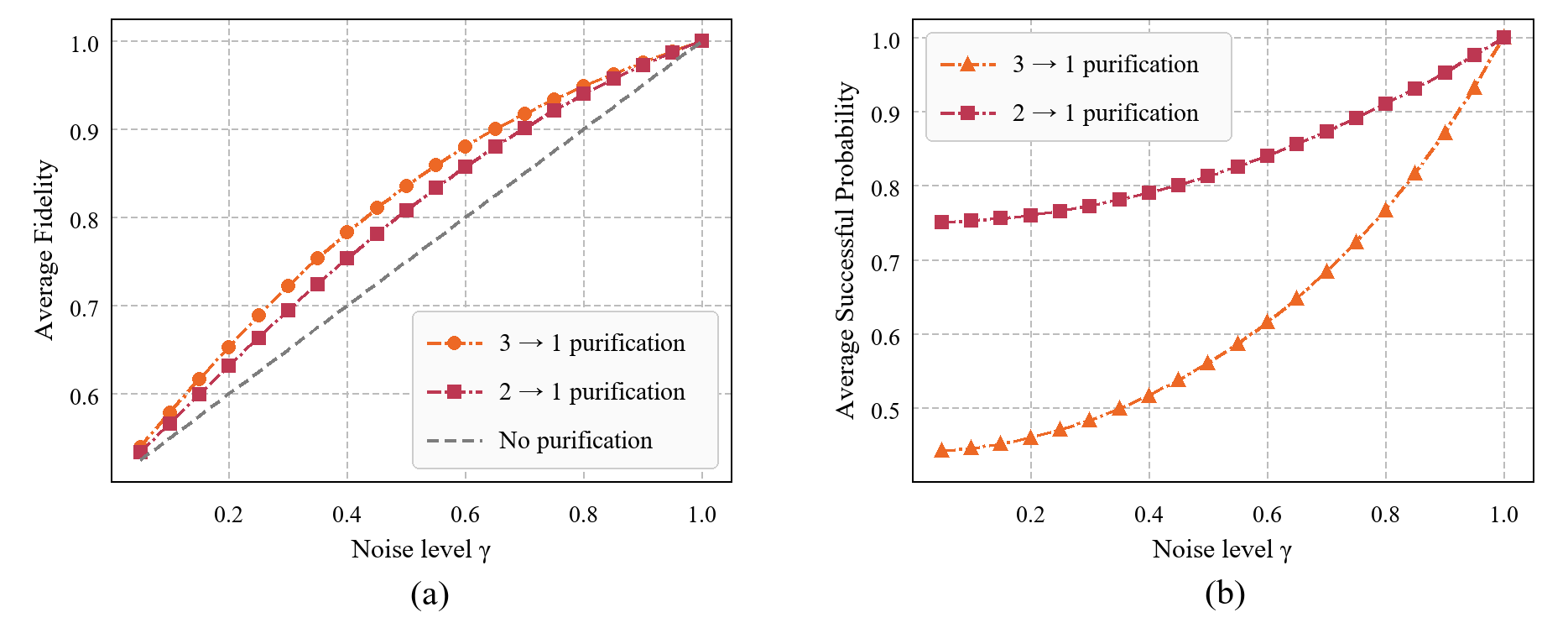}
    \caption{(a) Optimal average fidelity by energy-preserving purification. (b) Optimal average success probability by energy-preserving purification.}
    \label{fig:results} 
\end{figure}

\section{Conclusion}

In this work, we established a framework for universal quantum state purification under exact energy-preserving operations. For noise channels satisfying the full-support condition in Eq.~\eqref{eq:noise-condition}, including nontrivial depolarizing noises, we derived a necessary and sufficient criterion for when no universal energy-preserving purifier can outperform the unpurified input. A concrete family of nondegenerate two-qubit Hamiltonians shows that this restriction can completely remove a fidelity gain attainable by unrestricted operations. When purification remains possible, we obtained the maximum average fidelity, characterized the fidelity-maximizing protocols, and formulated a finite-dimensional SDP for the maximum average success probability at that fidelity, with a closed-form solution in the rank-one case. We also constructed an energy-preserving instrument and dilation implementing the optimal successful branch, while numerical examples illustrate the fidelity--success trade-off for two and three input copies. For a fully degenerate Hamiltonian, the framework reduces to unconstrained probabilistic purification.

Together, these results show that universal purifiability is determined jointly by the noise, the number of copies, and the allowed operations: energy conservation is not merely an implementation detail, but can change whether purification is possible at all. The framework also admits a battery-assisted extension. For a specified external battery initialized in a full-rank state, requiring the joint system--battery operation to preserve total energy yields analogous no-go and optimality characterizations, including the achievable-fidelity range and the success-probability optimization. This extension, detailed in Appendix~\ref{supp_sec:G}, makes the energetic resource explicit and places zero-energy and energy-assisted purification within a unified framework.

\textit{Data availability}---The data that support the findings of this work are openly available~\cite{codes}.

\textbf{Acknowledgments}---
X.W. acknowledges the support from the National Key R\&D Program of China (Grant No. 2024YFB4504004), the National Natural Science Foundation of China (Grant. No. 12447107), the Guangdong Provincial Quantum Science Strategic Initiative (Grant No. GDZX2403008, GDZX2503001).

% , the Guangdong Provincial Key Lab of Integrated Communication, Sensing and Computation for Ubiquitous Internet of Things (Grant No. 2023B1212010007).

% , the Quantum Science Center of Guangdong-Hong Kong-Macao Greater Bay Area.

% Finally, we add two remarks on the energy‑preserving purification framework, regarding its special case and the extension to the scenarios with a given engineered energy resource. 

% This connects the theory of quantum state purification with the practical imperative of energy efficiency in quantum technologies.

% The principal contributions are as follows. First, for a broad class of noise processes including non-trivial depolarizing noises, we have analytically derived the maximum achievable average fidelity under the energy-preserving constraint and identified the corresponding operations. Second, we have formulated the search for the optimal energy-preserving purification operation and its average success probability as a computationally tractable semidefinite program. The solution to this SDP routinely specifies a physical implementation composed entirely of steps that incur no energy cost. Third, we have provided a necessary and sufficient condition that dictates when any non-trivial purification is impossible without an energy cost, establishing a fundamental no-go theorem for energy-preserving purification.

%%%%%%%%%%%%%%%%%%%%%%%%%%%%%%%%%%%%%%%%%%%%%%%%%%%%%%%%%%%%
% Bibliography
%%%%%%%%%%%%%%%%%%%%%%%%%%%%%%%%%%%%%%%%%%%%%%%%%%%%%%%%%%%%
\bibliographystyle{apsrev4-2}
\bibliography{ref}

%%%%%%%%%%%%%%%%%%%%%%%%%%%%%%%%%%%%%%%%%%%%%%%%%%%%%%%%%%%%

\clearpage
\vspace{2cm}
\onecolumngrid
\vspace{2cm}
\begin{center}
{\textbf{\large Appendix for Universal quantum state purification under energy-preserving constraints }}
\end{center}
\appendix

%%%%%%%%%%%%%%%%%%%%%%%%%%%%%%%%%%%%%%%%%%%%%%%%%%%%%%%%%%%%
\renewcommand{\appendixname}{Appendix}

%%%%%%%%%%%%%%%%%%%%%%%%%%%%%%%%%%%%%%%%%%%%%%%%%%%%%%%%%%%%%%%%%%%%%%%%%%%
% \section{Energy-preserving operations}\label{supp_sec:A}

%%%%%%%%%%%%%%%%%%%%%%%%%%%%%%%%%%%%%%%%%%%%%%%%%%%%%%%%%%%%%%%%%%%%%%%%%%%

\section{Proof of Theorem~\ref{thm1}}\label{supp_sec:B}
\begin{proof}
\\
(i). To justify the direction $\Leftarrow$, it suffices to prove that
\begin{equation}
\frac{\int \text{d}\psi \,\operatorname{Tr}[\psi\cdot\cE(\mathcal{N}(\psi)^{\otimes n})]}{\int \text{d}\psi \,\operatorname{Tr}[\cE(\mathcal{N}(\psi)^{\otimes n})]}\leq\int \text{d}\psi \, \langle \psi | \mathcal{N}(\psi) | \psi \rangle
\end{equation}
holds for any energy-preserving $\cE$ with $\int \text{d}\psi \,\operatorname{Tr}[\cE(\mathcal{N}(\psi)^{\otimes n})]>0$. Or equivalently, 
\begin{equation}
\frac{\int \text{d}\psi \,\operatorname{Tr}[\psi\cdot\operatorname{Tr}_{2\cdots n}\circ\Lambda(\mathcal{N}(\psi)^{\otimes n})]}{\int \text{d}\psi \,\operatorname{Tr}[\Lambda(\mathcal{N}(\psi)^{\otimes n})]}\leq\int \text{d}\psi \, \langle \psi | \mathcal{N}(\psi) | \psi \rangle=\frac{\int \text{d}\psi \,\operatorname{Tr}[\psi\cdot\operatorname{Tr}_{2\cdots n}\circ id(\mathcal{N}(\psi)^{\otimes n})]}{\int \text{d}\psi \,\operatorname{Tr}[id(\mathcal{N}(\psi)^{\otimes n})]}
\end{equation}
holds for any $\Lambda\in\text{EPO}(H)$ with $\int \text{d}\psi \,\operatorname{Tr}[\Lambda(\mathcal{N}(\psi)^{\otimes n})]>0$.
\\
\\
\noindent Now for any such $\Lambda\in\text{EPO}(H)$, we have
\begin{align}
&\frac{\int \text{d}\psi \,\operatorname{Tr}[\psi\cdot\operatorname{Tr}_{2\cdots n}\circ\Lambda(\mathcal{N}(\psi)^{\otimes n})]}{\int \text{d}\psi \,\operatorname{Tr}[\Lambda(\mathcal{N}(\psi)^{\otimes n})]} \\
&=\frac{\int \text{d}\psi \,\operatorname{Tr}[\psi\otimes I_{\mathcal{H}^{\otimes (n-1)}}\cdot\Lambda(\mathcal{N}(\psi)^{\otimes n})]}{\int \text{d}\psi \,\operatorname{Tr}[\Lambda(\mathcal{N}(\psi)^{\otimes n})]} \\
&=\frac{\operatorname{Tr}[\Gamma_\Lambda\cdot\int \text{d}\psi \,(\mathcal{N}(\psi)^{\otimes n})^T\otimes\psi\otimes I_{\mathcal{H}^{\otimes (n-1)}}]}{\operatorname{Tr}[\Gamma_\Lambda\cdot (\int \text{d}\psi \,\mathcal{N}(\psi)^{\otimes n})^T\otimes I_{\mathcal{H}^{\otimes n}}]} \label{thm 1-1} \\
&=\frac{\operatorname{Tr}(\Gamma_\Lambda\cdot A)}{\operatorname{Tr}[(\Pi\cdot\Gamma_\Lambda\cdot \Pi)\cdot (\int \text{d}\psi \,\mathcal{N}(\psi)^{\otimes n})^T\otimes I_{\mathcal{H}^{\otimes n}}]} \label{thm 1-2}\\
&=\frac{\operatorname{Tr}(\Gamma_\Lambda\cdot A)}{\operatorname{Tr}(\Gamma_\Lambda\cdot C)}\label{thm 1-3}, 
\end{align}
where Eq.~\eqref{thm 1-1} - Eq.~\eqref{thm 1-2} utilizes the condition that $\Lambda\in\text{EPO}(H)$, that is, $\Gamma_\Lambda=\Pi\cdot\Gamma_\Lambda\cdot\Pi$ (see Lemma~\ref{lem1}).
\\
\\
%Since $\mathcal{N}$ is a non-trivial depolarizing noise, we can write $\cN(\psi) = (1-\gamma)\pi+ \gamma\psi$ for any pure state $\psi$, where $\gamma\in(0,1)$ and $\pi$ denotes the maximally mixed state. Then we have
%\begin{equation}\label{thm 1-4}
%\mathcal{N}(\psi)^{\otimes n}=(1-\gamma)^n\pi^{\otimes n}+[1-(1-\gamma)^n]\rho_\psi
%\end{equation}
%holds for any pure state $\psi$, where $\rho_\psi$ is a state depending on $\psi$. This leads to
%\begin{equation}\label{thm 1-5}
%\int \text{d}\psi \, \mathcal{N}(\psi)^{\otimes n}=(1-\gamma)^n\pi^{\otimes n}+[1-(1-\gamma)^n]\int \text{d}\psi \, \rho_\psi>0
%\end{equation}
%for $(1-\gamma)^n\pi^{\otimes n}>0$ and $[1-(1-\gamma)^n]\int \text{d}\psi \, \rho_\psi \geq0$.

By Eq.~\eqref{eq:noise-condition}, $\int d\psi\,\mathcal{N}(\psi)^{\otimes n}>0$. Since $(\int \text{d}\psi \,\mathcal{N}(\psi)^{\otimes n})^T\otimes I_{\mathcal{H}^{\otimes n}} >0$ and $\Pi$ is a projector, we have 
\begin{equation}\label{thm 1-7}
\text{Supp}(C)=\text{Supp}(\Pi\cdot[(\int \text{d}\psi \,\mathcal{N}(\psi)^{\otimes n})^T\otimes I_{\mathcal{H}^{\otimes n}}]\cdot\Pi)=\text{Supp}(\Pi).
\end{equation}
Subsequently,
\begin{equation}\label{thm 1-8}
C^{\frac{1}{2}}\cdot C^{-\frac{1}{2}}=C^{-\frac{1}{2}}\cdot C^{\frac{1}{2}}=\Pi.
\end{equation}
\\
Return to Eq.~\eqref{thm 1-3}, we have:
\begin{align}
\frac{\operatorname{Tr}(\Gamma_\Lambda\cdot A)}{\operatorname{Tr}(\Gamma_\Lambda\cdot C)}&=\frac{\operatorname{Tr}(\Pi\cdot\Gamma_\Lambda\cdot\Pi\cdot A)}{\operatorname{Tr}(\Gamma_\Lambda\cdot C)} \\
&=\frac{\operatorname{Tr}(C^{-\frac{1}{2}}\cdot C^{\frac{1}{2}}\Gamma_\Lambda C^{\frac{1}{2}}\cdot C^{-\frac{1}{2}}\cdot A)}{\operatorname{Tr}(\Gamma_\Lambda\cdot C)} \\
&=\operatorname{Tr}[\frac{C^{\frac{1}{2}}\cdot\Gamma_\Lambda\cdot C^{\frac{1}{2}}}{\operatorname{Tr}(\Gamma_\Lambda\cdot C)}\cdot C^{-\frac{1}{2}} A C^{-\frac{1}{2}}] \label{thm 4} \\
&\leq\left\lVert C^{-\frac{1}{2}} A C^{-\frac{1}{2}} \right\rVert_\infty. \label{thm 5} 
\end{align}
\\
Now from $C^{\frac{1}{2}}\cdot C^{-\frac{1}{2}}=C^{-\frac{1}{2}}\cdot C^{\frac{1}{2}}=\Pi$, $\text{Supp}(P_m)\subseteq\text{Supp}(C)=\text{Supp}(\Pi)$, and the condition Eq.~\eqref{thm1-1}, we have
\begin{equation}
P_m\frac{C^{\frac{1}{2}}\cdot\Gamma_{id}\cdot C^{\frac{1}{2}}}{\operatorname{Tr}(\Gamma_{id}\cdot C)}P_m=\frac{C^{\frac{1}{2}}\cdot\Gamma_{id}\cdot C^{\frac{1}{2}}}{\operatorname{Tr}(\Gamma_{id}\cdot C)},
\end{equation}
which means state $\frac{C^{\frac{1}{2}}\cdot\Gamma_{id}\cdot C^{\frac{1}{2}}}{\operatorname{Tr}(\Gamma_{id}\cdot C)}$ is supported on the eigenspace of $C^{-\frac{1}{2}} A C^{-\frac{1}{2}}$ with maximum eigenvalue $\left\lVert C^{-\frac{1}{2}} A C^{-\frac{1}{2}} \right\rVert_\infty$. So
\begin{equation}\label{thm 6}
\frac{\int \text{d}\psi \,\operatorname{Tr}[\psi\cdot\operatorname{Tr}_{2\cdots n}\circ id(\mathcal{N}(\psi)^{\otimes n})]}{\int \text{d}\psi \,\operatorname{Tr}[id(\mathcal{N}(\psi)^{\otimes n})]}=\frac{\operatorname{Tr}(\Gamma_{id}\cdot A)}{\operatorname{Tr}(\Gamma_{id}\cdot C)}=\operatorname{Tr}[\frac{C^{\frac{1}{2}}\cdot\Gamma_{id}\cdot C^{\frac{1}{2}}}{\operatorname{Tr}(\Gamma_{id}\cdot C)}\cdot C^{-\frac{1}{2}} A C^{-\frac{1}{2}}]=\left\lVert C^{-\frac{1}{2}} A C^{-\frac{1}{2}} \right\rVert_\infty.
\end{equation}
Combining Eq.~\eqref{thm 6} with Eq.~\eqref{thm 5}, the direction $\Leftarrow$ is then justified.
\\
\\
(ii). For the direction $\Rightarrow$, by the discussion of (i) we know that 
\begin{equation}\label{thm 7}
\frac{\int \text{d}\psi \,\operatorname{Tr}[\psi\cdot\operatorname{Tr}_{2\cdots n}\circ\Lambda(\mathcal{N}(\psi)^{\otimes n})]}{\int \text{d}\psi \,\operatorname{Tr}[\Lambda(\mathcal{N}(\psi)^{\otimes n})]}=\operatorname{Tr}[\frac{C^{\frac{1}{2}}\cdot\Gamma_\Lambda\cdot C^{\frac{1}{2}}}{\operatorname{Tr}(\Gamma_\Lambda\cdot C)}\cdot C^{-\frac{1}{2}} A C^{-\frac{1}{2}}]\leq\left\lVert C^{-\frac{1}{2}} A C^{-\frac{1}{2}} \right\rVert_\infty
\end{equation}
holds for any $\Lambda\in\text{EPO}(H)$ with $\int \text{d}\psi \,\operatorname{Tr}[\Lambda(\mathcal{N}(\psi)^{\otimes n})]>0$.\\ 
\\
Now for any state $\sigma$ whose support contained in the eigenspace of $C^{-\frac{1}{2}}AC^{-\frac{1}{2}}$ with maximum eigenvalue $\left\lVert C^{-\frac{1}{2}} A C^{-\frac{1}{2}} \right\rVert_\infty$, that is, any state $\sigma$ that satisfies $\sigma=P_m\sigma P_m$, we construct
\begin{equation}\label{pf-thm1-1}
\Gamma_{\Lambda_q}\coloneqq q C^{-\frac{1}{2}}\sigma C^{-\frac{1}{2}}
\end{equation}
for any $0 < q \leq \left\lVert \operatorname{Tr}_{\text{out}} \left( C^{-\frac{1}{2}} \sigma C^{-\frac{1}{2}} \right) \right\rVert_\infty^{-1}$ (note that $\left\lVert \operatorname{Tr}_{\text{out}} \left( C^{-\frac{1}{2}} \sigma C^{-\frac{1}{2}} \right) \right\rVert_\infty\neq0$ for any such $\sigma$ since $\operatorname{Tr}(C^{-\frac{1}{2}}\sigma C^{-\frac{1}{2}})=\operatorname{Tr}(C^{-1}\sigma)>0$). It is clear that $\Gamma_{\Lambda_q}\geq0$ and $\operatorname{Tr}_{\text{out}}(\Gamma_{\Lambda_q})\leq I$, which ensures $\Gamma_{\Lambda_q}$ are valid Choi operators for some CPTN $\Lambda_q$. Furthermore, note that $\Pi\cdot C^{-\frac{1}{2}}= C^{-\frac{1}{2}}\cdot\Pi= C^{-\frac{1}{2}}$, we have
\begin{equation}
\Pi\cdot\Gamma_{\Lambda_q}\cdot\Pi=\Gamma_{\Lambda_q},
\end{equation}
which means $\Gamma_{\Lambda_q}$ are valid Choi operators for some $\Lambda_q\in\text{EPO}(H)$ with $\int \text{d}\psi \,\operatorname{Tr}[\Lambda_q(\mathcal{N}(\psi)^{\otimes n})]=\operatorname{Tr}(\Gamma_{\Lambda_q}\cdot C)>0$. Then it is direct to verify that
\begin{equation}
P_m\frac{C^{\frac{1}{2}}\cdot\Gamma_{\Lambda_q}\cdot C^{\frac{1}{2}}}{\operatorname{Tr}(\Gamma_{\Lambda_q}\cdot C)}P_m=\frac{C^{\frac{1}{2}}\cdot\Gamma_{\Lambda_q}\cdot C^{\frac{1}{2}}}{\operatorname{Tr}(\Gamma_{\Lambda_q}\cdot C)},
\end{equation}
that is, $\frac{C^{\frac{1}{2}}\cdot\Gamma_{\Lambda_q}\cdot C^{\frac{1}{2}}}{\operatorname{Tr}(\Gamma_{\Lambda_q}\cdot C)}$ are states that are supported on the eigenspace of $C^{-\frac{1}{2}}AC^{-\frac{1}{2}}$ with maximum eigenvalue. So
\begin{equation}
\frac{\int \text{d}\psi \,\operatorname{Tr}[\psi\cdot\operatorname{Tr}_{2\cdots n}\circ\Lambda_q(\mathcal{N}(\psi)^{\otimes n})]}{\int \text{d}\psi \,\operatorname{Tr}[\Lambda_q(\mathcal{N}(\psi)^{\otimes n})]}=\operatorname{Tr}[\frac{C^{\frac{1}{2}}\cdot\Gamma_{\Lambda_q}\cdot C^{\frac{1}{2}}}{\operatorname{Tr}(\Gamma_{\Lambda_q}\cdot C)}\cdot C^{-\frac{1}{2}} A C^{-\frac{1}{2}}]=\left\lVert C^{-\frac{1}{2}} A C^{-\frac{1}{2}} \right\rVert_\infty.
\end{equation}
However, under the condition that there is no universal energy-preserving purification protocol for $(\mathcal{N},H,n)$, we must have
\begin{align}
\left\lVert C^{-\frac{1}{2}} A C^{-\frac{1}{2}} \right\rVert_\infty&=\frac{\int \text{d}\psi \,\operatorname{Tr}[\psi\cdot\operatorname{Tr}_{2\cdots n}\circ\Lambda_q(\mathcal{N}(\psi)^{\otimes n})]}{\int \text{d}\psi \,\operatorname{Tr}[\Lambda_q(\mathcal{N}(\psi)^{\otimes n})]}\\
&\leq\frac{\int \text{d}\psi \,\operatorname{Tr}[\psi\cdot\operatorname{Tr}_{2\cdots n}\circ id(\mathcal{N}(\psi)^{\otimes n})]}{\int \text{d}\psi \,\operatorname{Tr}[id(\mathcal{N}(\psi)^{\otimes n})]}\\
&=\operatorname{Tr}[\frac{C^{\frac{1}{2}}\cdot\Gamma_{id}\cdot C^{\frac{1}{2}}}{\operatorname{Tr}(\Gamma_{id}\cdot C)}\cdot C^{-\frac{1}{2}} A C^{-\frac{1}{2}}] \label{thm 8}\\
&\leq\left\lVert C^{-\frac{1}{2}} A C^{-\frac{1}{2}} \right\rVert_\infty \label{thm 9},
\end{align}
where Eq.~\eqref{thm 8} - Eq.~\eqref{thm 9} is from Eq.~\eqref{thm 7}. Then 
\begin{equation}
\operatorname{Tr}[\frac{C^{\frac{1}{2}}\cdot\Gamma_{id}\cdot C^{\frac{1}{2}}}{\operatorname{Tr}(\Gamma_{id}\cdot C)}\cdot C^{-\frac{1}{2}} A C^{-\frac{1}{2}}]=\left\lVert C^{-\frac{1}{2}} A C^{-\frac{1}{2}} \right\rVert_\infty.
\end{equation}
So $\frac{C^{\frac{1}{2}}\cdot\Gamma_{id}\cdot C^{\frac{1}{2}}}{\operatorname{Tr}(\Gamma_{id}\cdot C)}$ must be a state whose support is contained in the eigenspace of $ C^{-\frac{1}{2}} A C^{-\frac{1}{2}}$ with maximum eigenvalue, that is,
\begin{equation}
P_m\frac{C^{\frac{1}{2}}\cdot\Gamma_{id}\cdot C^{\frac{1}{2}}}{\operatorname{Tr}(\Gamma_{id}\cdot C)}P_m=\frac{C^{\frac{1}{2}}\cdot\Gamma_{id}\cdot C^{\frac{1}{2}}}{\operatorname{Tr}(\Gamma_{id}\cdot C)},
\end{equation}
more succinctly,
\begin{equation}
C^{-\frac{1}{2}}P_mC^{\frac{1}{2}}\cdot\Gamma_{id}\cdot C^{\frac{1}{2}}P_m C^{-\frac{1}{2}}=\Gamma_{id},
\end{equation}
for $C^{-\frac{1}{2}}\cdot C^{\frac{1}{2}}=C^{\frac{1}{2}}\cdot C^{-\frac{1}{2}}=\Pi$ and $\Pi\cdot\Gamma_{id}\cdot\Pi=\Gamma_{id}$.
\end{proof}

\section{Proof and Generalization of Theorem~\ref{thm2}}\label{supp_sec:C}
By the proof of Theorem~\ref{thm1} (see Appendix~\ref{supp_sec:B}), it is clear that 
\begin{equation}
\max_{\substack{\Lambda\in\text{EPO}(H)\\ \int \text{d}\psi \,\operatorname{Tr}[\Lambda(\mathcal{N}(\psi)^{\otimes n})]>0}}\frac{\int \text{d}\psi \,\operatorname{Tr}[\psi\cdot\operatorname{Tr}_{2\cdots n}\circ\Lambda(\mathcal{N}(\psi)^{\otimes n})]}{\int \text{d}\psi \,\operatorname{Tr}[\Lambda(\mathcal{N}(\psi)^{\otimes n})]}=\left\lVert C^{-\frac{1}{2}} A C^{-\frac{1}{2}} \right\rVert_\infty,
\end{equation}
which validates Theorem~\ref{thm2} immediately.

Notably, our analysis extends beyond determining the maximum average fidelity to providing a complete parameterization of all universal energy-preserving purification protocols. In the following theorem, we generalize Theorem~\ref{thm2} to characterize the Choi operators of these protocols. This parameterization explicitly presents the average fidelity and success probability of any valid protocol, enabling the construction of every protocol that delivers a specified legal average fidelity and success probability. Consequently, this formulation not only demonstrates the maximum average fidelity, but also elucidates the full landscape of achievable performances, offering a systematic method to construct every protocol that performs at any permissible level.

\begin{theorem}[Generalization of Theorem~\ref{thm2}]
\label{thm6}
Let $\cN$ satisfy Eq.~\eqref{eq:noise-condition}, and let $H$ be a given Hamiltonian. The $n\rightarrow1$ universal energy-preserving purification protocols $\mathcal{E} = \operatorname{Tr}_{2\cdots n} \circ \Lambda$ are completely characterized by the Choi operators
\begin{equation}\label{thm6-1}
\Gamma_{\Lambda} = q \, C^{-\frac{1}{2}} \, \sigma \, C^{-\frac{1}{2}},
\end{equation}
parameterized by state $\sigma$ and scalar $q$ subject to the following three constraints:
\begin{enumerate}
    \item \ $\text{Supp}(\sigma)\subseteq\text{Supp}(C)$;
    \item \ $\operatorname{Tr}(\sigma \cdot C^{-\frac{1}{2}} A C^{-\frac{1}{2}}) > \int \text{d}\psi \, \langle \psi | \mathcal{N}(\psi) | \psi \rangle$;
    \item \ $0 < q \leq \left\lVert \operatorname{Tr}_{\text{out}} \left( C^{-\frac{1}{2}} \sigma C^{-\frac{1}{2}} \right) \right\rVert_\infty^{-1}$,
\end{enumerate}
where $A$, $C$ are defined as in Theorem \ref{thm1} encoding the information of $(\mathcal{N}, H, n)$, $\operatorname{Tr}_{\text{out}}$ is the partial trace over the output system.
\\
In this representation, the performance metrics are given by $\bar{F}(\mathcal{N},\cE, H,n) = \operatorname{Tr}(\sigma \cdot C^{-\frac{1}{2}} A C^{-\frac{1}{2}})$ and $\bar{p}(\mathcal{N},\cE, H,n) = q$. Consequently, the maximum average fidelity is
\begin{equation}
F_{\text{max}}(\mathcal{N}, H, n) = \left\lVert C^{-\frac{1}{2}} A C^{-\frac{1}{2}} \right\rVert_\infty,
\end{equation}
and the set of achievable average purification fidelities constitutes an interval $$(\int \text{d}\psi \, \langle \psi | \mathcal{N}(\psi) | \psi \rangle, \, \left\lVert C^{-\frac{1}{2}} A C^{-\frac{1}{2}} \right\rVert_\infty].$$
\end{theorem}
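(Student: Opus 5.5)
The plan is to set up a bijection between the Choi operators of energy-preserving operations with positive average success probability and the pairs $(\sigma,q)$ in the statement. All the needed identities are already in the proof of Theorem~\ref{thm1} (Appendix~\ref{supp_sec:B}). For $\Lambda\in\text{EPO}(H)$ we have $\Gamma_\Lambda=\Pi\Gamma_\Lambda\Pi$, which gives
\[
\bar p=\operatorname{Tr}(\Gamma_\Lambda C), \qquad \bar p\,\bar F=\operatorname{Tr}(\Gamma_\Lambda A).
\]
The full-support condition Eq.~\eqref{eq:noise-condition} gives $\text{Supp}(C)=\text{Supp}(\Pi)$ and $C^{1/2}C^{-1/2}=C^{-1/2}C^{1/2}=\Pi$, with inverses taken on the support.

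For the forward direction, take $\cE=\operatorname{Tr}_{2\cdots n}\circ\Lambda\in\text{UEPP}(\cN,H,n)$ and set
\[
q:=\operatorname{Tr}(\Gamma_\Lambda C)=\bar p>0, \qquad \sigma:=C^{1/2}\Gamma_\Lambda C^{1/2}/q.
\]
Then $\sigma\ge0$ and $\operatorname{Tr}\sigma=1$. Its support lies in $\text{Supp}(C^{1/2})=\text{Supp}(C)$, which is constraint~1. Next, $qC^{-1/2}\sigma C^{-1/2}=\Pi\Gamma_\Lambda\Pi=\Gamma_\Lambda$, so Eq.~\eqref{thm6-1} holds. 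Eq.~\eqref{thm 4} gives $\bar F=\operatorname{Tr}(\sigma C^{-1/2}AC^{-1/2})$, and the purification requirement in Definition~\ref{def1} turns this into constraint~2. Constraint~3 is the trace-nonincreasing condition $\operatorname{Tr}_{\rm out}\Gamma_\Lambda\le I$, rewritten as $q\,\operatorname{Tr}_{\rm out}(C^{-1/2}\sigma C^{-1/2})\le I$. For the division by the norm to be legitimate I will check, as in part (ii) of the proof of Theorem~\ref{thm1}, that this norm is nonzero, because $\operatorname{Tr}(C^{-1}\sigma)>0$ for a state supported in $\text{Supp}(C)$.

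For the converse, take any $(\sigma,q)$ satisfying constraints 1 to 3 and define $\Gamma_\Lambda$ by Eq.~\eqref{thm6-1}. Positivity is clear, constraint~3 gives trace non-increase, and $\Pi C^{-1/2}=C^{-1/2}\Pi=C^{-1/2}$ gives $\Pi\Gamma_\Lambda\Pi=\Gamma_\Lambda$. By the Choi criterion of Lemma~\ref{lem1}, $\Gamma_\Lambda$ is therefore the Choi operator of some $\Lambda\in\text{EPO}(H)$. Using $\Pi\sigma\Pi=\sigma$ (constraint~1), the success probability is $\operatorname{Tr}(\Gamma_\Lambda C)=q\operatorname{Tr}(\sigma)=q>0$. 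The fidelity is $\operatorname{Tr}(\Gamma_\Lambda A)/q=\operatorname{Tr}(\sigma C^{-1/2}AC^{-1/2})$, which exceeds $F_{\rm raw}$ by constraint~2. So $\cE\in\text{UEPP}$, and the performance formulas hold.

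Finally, I turn to the fidelity set. Over states $\sigma$ supported in $\text{Supp}(C)$, the quantity $\operatorname{Tr}(\sigma M)$ with $M=C^{-1/2}AC^{-1/2}$ is bounded by $\lambda_{\max}(M)=\|M\|_\infty$, since $M\ge0$. The maximum is attained by $\sigma$ supported on $P_m$; note $P_m\le\Pi$ whenever $M\neq0$. The range is convex, because the set of such $\sigma$ is convex and the map is linear. It contains the raw value, via the normalized $C^{1/2}\Gamma_{id}C^{1/2}$, as in Eq.~\eqref{thm 6}. Hence, once we intersect with constraint~2, the achievable fidelities form exactly the interval $(F_{\rm raw},\|M\|_\infty]$, which is empty precisely in the situation of Theorem~\ref{thm1}. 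The only delicate point I anticipate is the support bookkeeping: making sure $\operatorname{Tr}(\Gamma_\Lambda C)=q\operatorname{Tr}\sigma$ really uses $\Pi\sigma\Pi=\sigma$, and that the forward map $\Gamma\mapsto\sigma$ and the backward map are mutual inverses on operators supported in $\text{Supp}(\Pi)$. Everything else is a restatement of Appendix~\ref{supp_sec:B}.
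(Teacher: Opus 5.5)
Your proposal is correct and follows essentially the same route as the paper. You use the same substitution $q=\operatorname{Tr}(\Gamma_\Lambda C)$, $\sigma=C^{1/2}\Gamma_\Lambda C^{1/2}/q$ for one direction, the Choi criterion $\Pi\Gamma_\Lambda\Pi=\Gamma_\Lambda$ together with $C^{-1/2}C^{1/2}=\Pi$ for the other, and the spectral bound on $\operatorname{Tr}(\sigma C^{-1/2}AC^{-1/2})$ for the fidelity range. Your convexity argument, with $F_{\rm raw}$ realized by the normalized $C^{1/2}\Gamma_{\id}C^{1/2}$, justifies the interval claim that the paper only asserts, and you correctly note that the interval is empty exactly in the no-go case of Theorem~\ref{thm1}.
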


\begin{proof}
\\
(i). We claim that any operation $\Lambda$ whose Choi operator takes the form of Eq.~\eqref{thm6-1} constitutes a valid universal energy-preserving purification protocol, and the performance metrics are $\bar{F}(\mathcal{N},\cE, H,n) = \operatorname{Tr}(\sigma \cdot C^{-\frac{1}{2}} A C^{-\frac{1}{2}})$ and $\bar{p}(\mathcal{N},\cE, H,n) = q$: First it is direct to see that $\Gamma_\Lambda$ is indeed a valid Choi operator, and we have
\begin{equation}
\Pi\cdot\Gamma_\Lambda\cdot \Pi= q \Pi\cdot\, C^{-\frac{1}{2}} \, \sigma \, C^{-\frac{1}{2}}\cdot\Pi=q \, C^{-\frac{1}{2}} \, \sigma \, C^{-\frac{1}{2}}=\Gamma_\Lambda,
\end{equation}
for $\Pi=C^{\frac{1}{2}}\cdot C^{-\frac{1}{2}}=C^{-\frac{1}{2}}\cdot C^{\frac{1}{2}}$ (see Eq.~\eqref{thm 1-8}), where $\Pi = \sum_E P_E^T \otimes P_E$ is the projector constructed from the spectral decomposition $H=\sum_E E P_E$. This justifies that $\Lambda\in\text{EPO}(H)$. Note that
\begin{equation}
 \int \text{d}\psi \, \operatorname{Tr}[\cE(\mathcal{N}(\psi)^{\otimes n})]=\int \text{d}\psi \, \operatorname{Tr}[\Lambda(\mathcal{N}(\psi)^{\otimes n})]=\operatorname{Tr}(\Gamma_\Lambda\cdot C)=q\cdot\operatorname{Tr}(\sigma\cdot\Pi)=q>0
\end{equation}
for $\text{Supp}(\sigma)\subseteq\text{Supp}(C)=\text{Supp}(\Pi)$, and
\begin{equation}
\frac{\int \text{d}\psi \,\operatorname{Tr}[\psi\cdot\operatorname{Tr}_{2\cdots n}\circ\Lambda(\mathcal{N}(\psi)^{\otimes n})]}{\int \text{d}\psi \,\operatorname{Tr}[\Lambda(\mathcal{N}(\psi)^{\otimes n})]}=\frac{\operatorname{Tr}(\Gamma_\Lambda\cdot A)}{\operatorname{Tr}(\Gamma_\Lambda\cdot C)}=\operatorname{Tr}(\sigma \cdot C^{-\frac{1}{2}} A C^{-\frac{1}{2}}) > \int \text{d}\psi \, \langle \psi | \mathcal{N}(\psi) | \psi \rangle,
\end{equation}
which justifies that $\Lambda$ constitutes a valid universal energy-preserving purification protocol and $\bar{p}(\mathcal{N},\cE, H,n) = q$, $\bar{F}(\mathcal{N},\cE, H,n) = \operatorname{Tr}(\sigma \cdot C^{-\frac{1}{2}} A C^{-\frac{1}{2}})$;\\ \\
(ii). We now show the opposite direction, that is, for any universal energy-preserving purification protocol $\mathcal{E} = \operatorname{Tr}_{2\cdots n} \circ \Lambda$, there exist state $\sigma$ and scalar $q$ with the three constraints, such that $\Gamma_{\Lambda} = q \, C^{-\frac{1}{2}} \, \sigma \, C^{-\frac{1}{2}}$, $\bar{F}(\mathcal{N},\cE, H,n) = \operatorname{Tr}(\sigma \cdot C^{-\frac{1}{2}} A C^{-\frac{1}{2}})$ and $\bar{p}(\mathcal{N},\cE, H,n) = q$: First note that the average success probability can be written as
\begin{equation}
\bar{p}(\mathcal{N},\cE, H,n) = \operatorname{Tr}(\Gamma_\Lambda\cdot C)>0,
\end{equation}
we can construct the state $\sigma$ as
\begin{equation}
\sigma\coloneqq\frac{C^{\frac{1}{2}}\cdot\Gamma_{\Lambda}\cdot C^{\frac{1}{2}}}{\operatorname{Tr}(\Gamma_{\Lambda}\cdot C)},
\end{equation}
then we have both $\text{Supp}(\sigma)\subseteq\text{Supp}(C)=\text{Supp}(\Pi)$ and
\begin{align}
\operatorname{Tr}(\sigma \cdot C^{-\frac{1}{2}} A C^{-\frac{1}{2}})&=\frac{\operatorname{Tr}(\Gamma_\Lambda\cdot A)}{\operatorname{Tr}(\Gamma_\Lambda\cdot C)}\\
&=\frac{\int \text{d}\psi \,\operatorname{Tr}[\psi\cdot\operatorname{Tr}_{2\cdots n}\circ\Lambda(\mathcal{N}(\psi)^{\otimes n})]}{\int \text{d}\psi \,\operatorname{Tr}[\Lambda(\mathcal{N}(\psi)^{\otimes n})]}\\
&=\bar{F}(\mathcal{N},\cE, H,n)\\
&>\int \text{d}\psi \, \langle \psi | \mathcal{N}(\psi) | \psi \rangle.
\end{align}
Choose $q\coloneqq\bar{p}(\mathcal{N},\cE, H,n) = \operatorname{Tr}(\Gamma_\Lambda\cdot C)>0$, then we have $\Gamma_{\Lambda} = q \, C^{-\frac{1}{2}} \, \sigma \, C^{-\frac{1}{2}}$ for $C^{\frac{1}{2}}\cdot C^{-\frac{1}{2}}=C^{-\frac{1}{2}}\cdot C^{\frac{1}{2}}=\Pi$ and $\Pi\cdot\Gamma_\Lambda\cdot\Pi=\Gamma_\Lambda$. Furthermore, since $\Gamma_\Lambda$ is a valid Choi operator, we must have
\begin{equation}
\operatorname{Tr}_{\text{out}}(\Gamma_\Lambda)=q \operatorname{Tr}_{\text{out}} \left( C^{-\frac{1}{2}} \sigma C^{-\frac{1}{2}} \right) \leq I,
\end{equation}
which implies $q\left\lVert \operatorname{Tr}_{\text{out}} \left( C^{-\frac{1}{2}} \sigma C^{-\frac{1}{2}} \right) \right\rVert_\infty\leq1$, that is, $q \leq \left\lVert \operatorname{Tr}_{\text{out}} \left( C^{-\frac{1}{2}} \sigma C^{-\frac{1}{2}} \right) \right\rVert_\infty^{-1}$ (note that we always have $\left\lVert \operatorname{Tr}_{\text{out}} \left( C^{-\frac{1}{2}} \sigma C^{-\frac{1}{2}} \right) \right\rVert_\infty>0$ for $\Gamma_\Lambda\neq0$);\\ \\
(iii). Finally, by (i) and (ii), the set of achievable average purification fidelities is precisely
\begin{equation}
\{\operatorname{Tr}(\sigma \cdot C^{-\frac{1}{2}} A C^{-\frac{1}{2}}):\sigma\geq0,\operatorname{Tr}\sigma=1,\text{Supp}(\sigma)\subseteq\text{Supp}(C),\operatorname{Tr}(\sigma \cdot C^{-\frac{1}{2}} A C^{-\frac{1}{2}}) > \int \text{d}\psi \, \langle \psi | \mathcal{N}(\psi) | \psi \rangle\},
\end{equation}
i.e., a continuous interval $(\int \text{d}\psi \, \langle \psi | \mathcal{N}(\psi) | \psi \rangle, \, \left\lVert C^{-\frac{1}{2}} A C^{-\frac{1}{2}} \right\rVert_\infty]$. Specifically, we have $F_{\text{max}}(\mathcal{N}, H, n) = \left\lVert C^{-\frac{1}{2}} A C^{-\frac{1}{2}} \right\rVert_\infty$.
\end{proof}

By the Theorem~\ref{thm6}, the optimal fidelity set $\text{UEPP}^*(\mathcal{N}, H, n)$ can be characterized as 
\begin{align}
    \text{UEPP}^*(\mathcal{N}, H, n)=\{\operatorname{Tr}_{2\cdots n} \circ \Lambda:&\Gamma_{\Lambda} = q \, C^{-\frac{1}{2}} \, \sigma \, C^{-\frac{1}{2}}, \operatorname{Tr}(\sigma \cdot C^{-\frac{1}{2}} A C^{-\frac{1}{2}})=\left\lVert C^{-\frac{1}{2}} A C^{-\frac{1}{2}} \right\rVert_\infty,\\
    &0 < q \leq \left\lVert \operatorname{Tr}_{\text{out}} \left( C^{-\frac{1}{2}} \sigma C^{-\frac{1}{2}} \right) \right\rVert_\infty^{-1}\}.
\end{align}

\section{Purification with a given energy cost}\label{supp_sec:G}

For purification with a given energy cost, a battery is attached to the $n$-copy system. Write the Hilbert space of the battery system as $\mathcal{H}_r$. Denote the state of the battery as $\varphi\in\mathcal{D}(\mathcal{H}_r)$. Let the Hamiltonian governing the composite system of the battery and the $n$ noisy copies be $\widetilde{H}\in\mathcal{B}(\mathcal{H}_r\otimes\mathcal{H}^{\otimes n})$. Similarly to the zero energy cost scenarios (see Definition~\ref{def1}), we define the $n\rightarrow 1$ \textit{universal energy-constrained purification protocols} as those operations of the form
\begin{equation}
\cE=\operatorname{Tr}_{\mathcal{H}_r}\circ\operatorname{Tr}_{2...n}\circ\widetilde{\Lambda},
\end{equation}
with energy-preserving $\widetilde{\Lambda}\in\text{EPO}(\widetilde{H})$, nonvanishing average success probability
\begin{equation}
\bar{p}(\mathcal{N},\cE, \widetilde{H},\varphi,n)\coloneqq\int \text{d}\psi \,\operatorname{Tr}[\cE(\varphi\otimes\mathcal{N}(\psi)^{\otimes n})]>0,
\end{equation}
and average fidelity improvement
\begin{equation}
\bar{F}(\mathcal{N},\cE, \widetilde{H}, \varphi,n)\coloneqq\frac{\int \text{d}\psi \,\operatorname{Tr}[\psi \cdot \cE(\varphi\otimes\mathcal{N}(\psi)^{\otimes n})]}{\bar{p}(\mathcal{N},\cE, \widetilde{H},\varphi,n)}> \int d\psi \, F(\cN(\psi), \psi).
\end{equation}
For such $(\cN, \widetilde{H}, \varphi, n)$, denote the set of all $n\rightarrow1$ universal energy-constrained purification protocols as \text{UECP}$(\cN, \widetilde{H}, \varphi, n)$. In this regard, the maximum average fidelity reads
\begin{align}
F_{\text{max}}(\mathcal{N}, \widetilde{H}, \varphi, n)&\coloneqq \max_{\cE\in\text{UECP}(\cN,\widetilde{H},\varphi,n)}\bar{F}(\mathcal{N},\cE, \widetilde{H}, \varphi,n)\\
&=  \max_{\cE\in\text{UECP}(\cN,\widetilde{H},\varphi,n)}\frac{\int \text{d}\psi \,\operatorname{Tr}[\psi \cdot \cE(\varphi\otimes\mathcal{N}(\psi)^{\otimes n})]}{\int \text{d}\psi \,\operatorname{Tr}[\cE(\varphi\otimes\mathcal{N}(\psi)^{\otimes n})]}.
\end{align}
Furthermore, denote the set of all $n\rightarrow1$ universal energy-constrained purification protocols that deliver $F_{\text{max}}(\mathcal{N}, \widetilde{H}, \varphi, n)$ as $\text{UECP}^*(\cN, \widetilde{H}, \varphi, n)$, then the maximum average success probability is
\begin{align}
p_{\text{succ}}^{\text{max}}(\mathcal{N}, \widetilde{H}, \varphi, n) &\coloneqq\max_{\cE\in\text{UECP}^*(\cN,\widetilde{H},\varphi,n)} \bar{p}(\mathcal{N},\cE, \widetilde{H},\varphi,n)\\
&=\max_{\cE\in\text{UECP}^*(\cN,\widetilde{H},\varphi,n)} \int \text{d}\psi \,\operatorname{Tr}[\cE(\varphi\otimes\mathcal{N}(\psi)^{\otimes n})],
\end{align}
In the following, we consider the case where $\cN$ satisfies the condition Eq.~\eqref{eq:noise-condition} and $\varphi$ is engineered as a full-rank state. Consider
\begin{equation}
\widetilde{C} \coloneqq \widetilde{\Pi} \cdot \left( \varphi\otimes \mathcal{N}^{\otimes n}(s_n) \right)^{\text{T}} \otimes I_{\mathcal{H}_r\otimes\mathcal{H}^{\otimes n}} \cdot \widetilde{\Pi},
\end{equation}
where $\widetilde{\Pi}\coloneqq\sum_E \widetilde{P}_E^T \otimes \widetilde{P}_E$ constructed from the spectral decomposition $\widetilde{H}=\sum_EE\widetilde{P}_E$. The condition $\varphi>0$ guarantees that
\begin{equation}
\text{Supp}(\widetilde{C})=\text{Supp}(\widetilde{\Pi}).
\end{equation}
This implies that Theorem~\ref{thm1}, Theorem~\ref{thm2}, Theorem~\ref{thm6}, Proposition~\ref{pro3} and Corollary~\ref{cor:rank-one} can all be extended to non-zero energy cost settings, just by following the same route. The extensions are formalized below omitting the proofs.   

\begin{theorem}\label{thm7}
Let $\mathcal{N}$ satisfy Eq.~\eqref{eq:noise-condition}, let the battery be initialized in a full-rank state $\varphi$, and let $\widetilde{H}$ be the Hamiltonian governing the battery and the $n$ noisy copies. Then there exists no $n\rightarrow1$ universal energy-constrained purification protocol
% for the triple $(\mathcal{N},H,n)$
if and only if
\begin{equation}
\widetilde{C}^{-\frac{1}{2}}\widetilde{P}_m\widetilde{C}^{\frac{1}{2}}\cdot\Gamma_{id}\cdot \widetilde{C}^{\frac{1}{2}}\widetilde{P}_m \widetilde{C}^{-\frac{1}{2}}=\Gamma_{id},
\end{equation}
where $\Gamma_{id}$ is the Choi operator of the identity channel $id$. Encapsulating the information of $(\mathcal{N},\widetilde{H},\varphi,n)$, the operators $\widetilde{A}$ and $\widetilde{C}$ are defined as
\begin{align}
    \widetilde{A} &= \varphi^{\text{T}}\otimes\left[\int \text{d}\psi \, (\mathcal{N}(\psi)^{\otimes n})^{\text{T}}\otimes I_{\mathcal{H}_r}\otimes\psi\right]\otimes I_{\mathcal{H}^{\otimes (n-1)}}, \\ 
\widetilde{C} &= \widetilde{\Pi} \cdot \left( \varphi\otimes \mathcal{N}^{\otimes n}(s_n) \right)^{\text{T}} \otimes I_{\mathcal{H}_r\otimes\mathcal{H}^{\otimes n}} \cdot \widetilde{\Pi},
\end{align}
with $\widetilde{\Pi} = \sum_E \widetilde{P}_E^T \otimes \widetilde{P}_E$ constructed from the spectral decomposition $\widetilde{H}=\sum_E E \widetilde{P}_E$; $s_k$ denotes the maximally mixed state on the symmetric subspace of $\mathcal{H}^{\otimes k}$; and $\widetilde{P}_m$ is the projector on the eigenspace of $\widetilde{C}^{-\frac{1}{2}}\cdot \widetilde{A}\cdot \widetilde{C}^{-\frac{1}{2}}$ with maximum eigenvalue.
\end{theorem}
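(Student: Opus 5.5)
The plan is to follow the proof of Theorem~\ref{thm1} in Appendix~\ref{supp_sec:B} line by line, with the $n$-copy system replaced by the composite battery-plus-copies system $\mathcal{H}_r\otimes\mathcal{H}^{\otimes n}$, the projector $\Pi$ replaced by $\widetilde{\Pi}$, and the input ensemble $\mathcal{N}(\psi)^{\otimes n}$ replaced by $\varphi\otimes\mathcal{N}(\psi)^{\otimes n}$. The comparison protocol is the one that does nothing: the identity on the composite system, followed by discarding the battery and copies $2,\dots,n$. It yields average fidelity $F_{\rm raw}(\mathcal{N})$ and success probability $1$, and its Choi operator $\Gamma_{id}$ satisfies $\widetilde{\Pi}\Gamma_{id}\widetilde{\Pi}=\Gamma_{id}$, since the identity commutes with $\widetilde{H}$.

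\textbf{Trace formulas.} For any $\widetilde{\Lambda}\in\text{EPO}(\widetilde{H})$, I will express the average success probability and the unnormalized average fidelity as traces against the Choi operator $\Gamma_{\widetilde{\Lambda}}$.
\begin{itemize}
\item Haar-averaging the input gives $\bar p=\operatorname{Tr}[\Gamma_{\widetilde{\Lambda}}\,(\varphi\otimes\mathcal{N}^{\otimes n}(s_n))^{T}\otimes I]$. Using $\Gamma_{\widetilde{\Lambda}}=\widetilde{\Pi}\Gamma_{\widetilde{\Lambda}}\widetilde{\Pi}$ from the Choi criterion of Lemma~\ref{lem1}, this equals $\operatorname{Tr}(\Gamma_{\widetilde{\Lambda}}\widetilde{C})$.
\item The numerator is $\operatorname{Tr}(\Gamma_{\widetilde{\Lambda}}\widetilde{A})$. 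Here $\widetilde{A}$ must be read with the tensor factors ordered to match the Choi convention: input $\varphi^T$, input $(\mathcal{N}(\psi)^{\otimes n})^T$, then output $I_{\mathcal{H}_r}\otimes\psi\otimes I$. This only requires the linearity of the Haar integral.
\item Since $\varphi>0$ and $\mathcal{N}^{\otimes n}(s_n)>0$, the middle operator in $\widetilde{C}$ is strictly positive. Hence $\text{Supp}(\widetilde{C})=\text{Supp}(\widetilde{\Pi})$ and $\widetilde{C}^{1/2}\widetilde{C}^{-1/2}=\widetilde{C}^{-1/2}\widetilde{C}^{1/2}=\widetilde{\Pi}$, with inverses taken on the support. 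This is the only place full rank of $\varphi$ is used.
\end{itemize}

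\textbf{Rayleigh-quotient form.} Combining these gives
\[
\bar F=\operatorname{Tr}\big[\widetilde{\sigma}\,\widetilde{C}^{-1/2}\widetilde{A}\widetilde{C}^{-1/2}\big],\qquad \widetilde{\sigma}:=\widetilde{C}^{1/2}\Gamma_{\widetilde{\Lambda}}\widetilde{C}^{1/2}/\operatorname{Tr}(\Gamma_{\widetilde{\Lambda}}\widetilde{C}).
\]
It follows that $\bar F\le\|\widetilde{C}^{-1/2}\widetilde{A}\widetilde{C}^{-1/2}\|_\infty$, with equality iff $\widetilde{P}_m\widetilde{\sigma}\widetilde{P}_m=\widetilde{\sigma}$.

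\textbf{The direction $\Leftarrow$.} Assume the stated condition holds. Conjugating it by $\widetilde{C}^{1/2}$ and using $\widetilde{\Pi}\widetilde{P}_m=\widetilde{P}_m$ shows that $\widetilde{\sigma}_{id}$ lies in the top eigenspace. So the do-nothing protocol already attains the spectral norm, which bounds every $\widetilde{\Lambda}$. Hence no protocol strictly beats $F_{\rm raw}$.

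\textbf{The direction $\Rightarrow$.} Take any state $\sigma=\widetilde{P}_m\sigma\widetilde{P}_m$ and set $\Gamma:=q\,\widetilde{C}^{-1/2}\sigma\widetilde{C}^{-1/2}$, with $q$ small enough that $\operatorname{Tr}_{\rm out}\Gamma\le I$. Then:
\begin{itemize}
\item $\Gamma\ge 0$ and $\widetilde{\Pi}\Gamma\widetilde{\Pi}=\Gamma$, so $\Gamma$ is the Choi operator of some $\widetilde{\Lambda}\in\text{EPO}(\widetilde{H})$;
\item its success probability is $q>0$;
\item its average fidelity equals $\|\widetilde{C}^{-1/2}\widetilde{A}\widetilde{C}^{-1/2}\|_\infty$.
\end{itemize}
By the no-go hypothesis this value is at most $F_{\rm raw}$, which is the fidelity of the identity, which in turn is at most the norm. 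So the identity saturates the bound, $\widetilde{\sigma}_{id}$ is supported in $\operatorname{ran}\widetilde{P}_m$, and undoing the conjugation with $\widetilde{C}^{\pm1/2}$ and $\widetilde{\Pi}\Gamma_{id}\widetilde{\Pi}=\Gamma_{id}$ gives the displayed identity.

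The main obstacle is bookkeeping rather than conceptual. One must check that the ordering of tensor factors in $\widetilde{A}$ and $\widetilde{C}$ (battery input, copies input, battery output, copies output) agrees with the factor ordering in $\widetilde{\Pi}=\sum_E\widetilde{P}_E^T\otimes\widetilde{P}_E$. One must also check that $\Gamma_{id}$ is understood as the Choi operator of the identity on the whole composite system.
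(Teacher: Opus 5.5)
Your proposal is correct and takes essentially the paper's approach. The paper omits this proof and says only that the argument of Theorem~\ref{thm1} carries over once full rank of $\varphi$ gives $\text{Supp}(\widetilde{C})=\text{Supp}(\widetilde{\Pi})$, and your transcription supplies exactly that: the Rayleigh-quotient bound, the do-nothing comparison, and the $q\,\widetilde{C}^{-1/2}\sigma\widetilde{C}^{-1/2}$ construction for $\Rightarrow$. The one fact you use tacitly, $\widetilde{\Pi}\widetilde{P}_m=\widetilde{P}_m$, holds because $\widetilde{C}^{-1/2}\widetilde{A}\widetilde{C}^{-1/2}$ vanishes off $\text{Supp}(\widetilde{\Pi})$ and its top eigenvalue is at least $F_{\rm raw}>0$; the paper relies on it equally tacitly.
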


\begin{theorem}\label{thm8}
Under the assumptions of Theorem~\ref{thm7}, the $n\rightarrow 1$ universal energy-constrained purification protocols $\cE=\operatorname{Tr}_{\mathcal{H}_r}\circ\operatorname{Tr}_{2...n}\circ\widetilde{\Lambda}$ are completely characterized by the Choi operators
\begin{equation}
\Gamma_{\widetilde{\Lambda}} = q \, \widetilde{C}^{-\frac{1}{2}}\cdot \, \widetilde{\sigma} \, \cdot\widetilde{C}^{-\frac{1}{2}},
\end{equation}
parameterized by state $\widetilde{\sigma}$ and scalar $q$ subject to the following three constraints:
\begin{enumerate}
    \item \ $\text{Supp}(\widetilde{\sigma})\subseteq\text{Supp}(\widetilde{C})$;
    \item \ $\operatorname{Tr}(\widetilde{\sigma} \cdot \widetilde{C}^{-\frac{1}{2}}\cdot \widetilde{A}\cdot \widetilde{C}^{-\frac{1}{2}}) > \int \text{d}\psi \, \langle \psi | \mathcal{N}(\psi) | \psi \rangle$;
    \item \ $0 < q \leq \left\lVert \operatorname{Tr}_{\text{out}} \left( \widetilde{C}^{-\frac{1}{2}}\cdot \widetilde{\sigma}\cdot \widetilde{C}^{-\frac{1}{2}} \right) \right\rVert_\infty^{-1}$,
\end{enumerate}
where $\widetilde{A}$, $\widetilde{C}$ are defined as in Theorem \ref{thm7} encoding $(\mathcal{N},\widetilde{H},\varphi,n)$, $\operatorname{Tr}_{\text{out}}$ is the partial trace over the output system.
\\
In this representation, the performance metrics are $\bar{F}(\mathcal{N},\cE, \widetilde{H}, \varphi,n) = \operatorname{Tr}(\widetilde{\sigma} \cdot \widetilde{C}^{-\frac{1}{2}}\cdot \widetilde{A}\cdot \widetilde{C}^{-\frac{1}{2}})$ and $\bar{p}(\mathcal{N},\cE, \widetilde{H},\varphi,n) = q$. Consequently, the maximum average fidelity is
\begin{equation}
F_{\text{max}}(\mathcal{N},\widetilde{H},\varphi,n) = \left\lVert \widetilde{C}^{-\frac{1}{2}}\cdot \widetilde{A}\cdot \widetilde{C}^{-\frac{1}{2}} \right\rVert_\infty,
\end{equation}
and the set of achievable average purification fidelities is an interval 
$$(\int \text{d}\psi \, \langle \psi | \mathcal{N}(\psi) | \psi \rangle, \, \left\lVert \widetilde{C}^{-\frac{1}{2}}\cdot \widetilde{A}\cdot \widetilde{C}^{-\frac{1}{2}} \right\rVert_\infty].$$
\end{theorem}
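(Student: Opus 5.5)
The plan is to carry over the proof of Theorem~\ref{thm6} verbatim, replacing $(A,C,\Pi,\Lambda)$ by $(\widetilde{A},\widetilde{C},\widetilde{\Pi},\widetilde{\Lambda})$.

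\textbf{Step 1: reduce both metrics to traces against the Choi operator.} For $\widetilde{\Lambda}\in\text{EPO}(\widetilde{H})$ with Choi operator $\Gamma_{\widetilde{\Lambda}}$, I would first use the Choi representation $\widetilde{\Lambda}(X)=\operatorname{Tr}_{\rm in}[\Gamma_{\widetilde{\Lambda}}(X^{\rm T}\otimes I)]$ and linearity of the Haar integral. This should give
\begin{equation}
\bar{p}=\operatorname{Tr}\!\big[\Gamma_{\widetilde{\Lambda}}\,(\varphi\otimes\mathcal{N}^{\otimes n}(s_n))^{\rm T}\otimes I\big]
\quad\text{and}\quad
\int d\psi\,\operatorname{Tr}[\psi\,\cE(\varphi\otimes\mathcal{N}(\psi)^{\otimes n})]=\operatorname{Tr}(\Gamma_{\widetilde{\Lambda}}\widetilde{A}).
\end{equation}
In the second identity the output projector is $I_{\mathcal{H}_r}\otimes\psi\otimes I_{\mathcal{H}^{\otimes(n-1)}}$, which is how $\widetilde{A}$ is built. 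The Choi criterion of Lemma~\ref{lem1} gives $\Gamma_{\widetilde{\Lambda}}=\widetilde{\Pi}\Gamma_{\widetilde{\Lambda}}\widetilde{\Pi}$. Inserting $\widetilde{\Pi}$ on both sides of the operator in $\bar{p}$ then yields $\bar{p}=\operatorname{Tr}(\Gamma_{\widetilde{\Lambda}}\widetilde{C})$.

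\textbf{Step 2: invert $\widetilde{C}$ on its support.} Since $\varphi>0$ and Eq.~\eqref{eq:noise-condition} holds, the middle factor of $\widetilde{C}$ is positive definite. Hence $\text{Supp}(\widetilde{C})=\text{Supp}(\widetilde{\Pi})$, and with pseudo-inverses,
\begin{equation}
\widetilde{C}^{\frac12}\widetilde{C}^{-\frac12}=\widetilde{C}^{-\frac12}\widetilde{C}^{\frac12}=\widetilde{\Pi}.
\end{equation}
I expect this to be the only point needing care, because $\widetilde{\Pi}$ does not factor across battery and system. Still, the argument only uses that $\widetilde{\Pi}$ is a projector sandwiching a positive-definite operator, so it should go through as in Eq.~\eqref{thm 1-7}.

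\textbf{Step 3: the two directions of the characterization.} With Steps 1--2, the fidelity becomes
\begin{equation}
\bar{F}=\operatorname{Tr}(\widetilde{\sigma}\,\widetilde{C}^{-\frac12}\widetilde{A}\widetilde{C}^{-\frac12}),
\qquad
\widetilde{\sigma}:=\frac{\widetilde{C}^{\frac12}\Gamma_{\widetilde{\Lambda}}\widetilde{C}^{\frac12}}{\operatorname{Tr}(\Gamma_{\widetilde{\Lambda}}\widetilde{C})}.
\end{equation}
\emph{Necessity.} Given a protocol, set $q=\bar{p}$ and define $\widetilde{\sigma}$ as above. Then $\text{Supp}(\widetilde{\sigma})\subseteq\text{Supp}(\widetilde{C})$, and $\Gamma_{\widetilde{\Lambda}}=q\widetilde{C}^{-\frac12}\widetilde{\sigma}\widetilde{C}^{-\frac12}$ holds by Step 2. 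The fidelity condition is constraint 2. The requirement $\operatorname{Tr}_{\rm out}\Gamma_{\widetilde{\Lambda}}\le I$ gives constraint 3.

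\emph{Sufficiency.} Given $(\widetilde{\sigma},q)$ satisfying the three constraints, the operator $q\widetilde{C}^{-\frac12}\widetilde{\sigma}\widetilde{C}^{-\frac12}$ is positive semidefinite and trace non-increasing. It is invariant under $\widetilde{\Pi}$-sandwiching because $\widetilde{\Pi}\widetilde{C}^{-\frac12}=\widetilde{C}^{-\frac12}$, so it is the Choi operator of an element of $\text{EPO}(\widetilde{H})$. Its success probability is $q\operatorname{Tr}(\widetilde{\sigma}\widetilde{\Pi})=q>0$, and its fidelity is the stated trace.

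\textbf{Step 4: the range of achievable fidelities.} The achievable fidelities are the values of $\operatorname{Tr}(\widetilde{\sigma}M)$, with $M=\widetilde{C}^{-\frac12}\widetilde{A}\widetilde{C}^{-\frac12}$, over states $\widetilde{\sigma}$ supported in $\text{Supp}(\widetilde{\Pi})$ that exceed $F_{\rm raw}$. Over such states, $\operatorname{Tr}(\widetilde{\sigma}M)$ ranges over the interval between the extreme eigenvalues of $M$ restricted to $\text{Supp}(\widetilde{\Pi})$. This is because $M$ is supported there and the set of states is convex. The maximum $\|M\|_\infty$ is attained by a $\widetilde{\sigma}$ supported in the top eigenspace.

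For the lower end I need $F_{\rm raw}$ to lie in this range. It does, because the identity channel on battery and copies is energy-preserving. Its normalized $\widetilde{\sigma}$ gives exactly $F_{\rm raw}$, since tracing out the battery and the other copies leaves $\mathcal{N}(\psi)$. Together with the preceding characterization, this yields the interval $(F_{\rm raw},\|M\|_\infty]$ and the formula for $F_{\max}$.
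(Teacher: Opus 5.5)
Your proposal is correct and takes the same route as the paper. The paper omits this proof and only says that the argument of Theorem~\ref{thm6} carries over once $\varphi>0$ gives $\text{Supp}(\widetilde{C})=\text{Supp}(\widetilde{\Pi})$, which is what your Steps 1--3 carry out. Your identity-channel argument, placing $F_{\rm raw}$ inside the range of $\operatorname{Tr}(\widetilde{\sigma}\,\widetilde{C}^{-\frac12}\widetilde{A}\widetilde{C}^{-\frac12})$, also justifies the open lower endpoint of the fidelity interval, a point the paper's proof of Theorem~\ref{thm6} asserts without argument.
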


\begin{proposition}
\label{pro3_plus}
Under the assumptions of Theorem~\ref{thm7}, the maximum average success probability of $n\rightarrow 1$ universal energy-constrained purification protocols is given by the following SDP:
\begin{align}
p_{\text{succ}}^{\text{max}}(\mathcal{N}, \widetilde{H}, \varphi, n)=\textrm{maximize} \quad & \operatorname{Tr}(\Gamma_{\widetilde{\Lambda}}\cdot \widetilde{C}) \label{pro 1}\\
\textrm{subject to} \quad & \Gamma_{\widetilde{\Lambda}} \geq 0, \label{pro 2}\\
& \operatorname{Tr}_{\text{out}}(\Gamma_{\widetilde{\Lambda}}) \leq I_{\mathcal{H}_r\otimes\mathcal{H}^{\otimes n}}, \label{pro 3}\\
 \widetilde{C}^{-\frac{1}{2}} \widetilde{P}_m \widetilde{C}^{\frac{1}{2}} \cdot &\Gamma_{\widetilde{\Lambda}} \cdot \widetilde{C}^{\frac{1}{2}} \widetilde{P}_m \widetilde{C}^{-\frac{1}{2}} = \Gamma_{\widetilde{\Lambda}}, \label{pro 4}
\end{align}
where $\widetilde{C}$, $\widetilde{P}_m$, $\operatorname{Tr}_{\text{out}}$ are all defined as in Theorem \ref{thm8}. Moreover, the solution $\Gamma_{\widetilde{\Lambda}^*}$ of the SDP represents the Choi operator of $\widetilde{\Lambda}^*$, yielding the optimal protocol $\cE^*=\operatorname{Tr}_{\mathcal{H}_r}\circ\operatorname{Tr}_{2\cdots n} \circ \widetilde{\Lambda}^*$ that captures both $F_{\text{max}}(\mathcal{N}, \widetilde{H}, \varphi,n)$ and $p_{\text{succ}}^{\text{max}}(\mathcal{N}, \widetilde{H}, \varphi, n)$.
\end{proposition}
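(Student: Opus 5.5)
We follow the same route as the proof of Proposition~\ref{pro3}, now using Theorem~\ref{thm8} as the battery-assisted analogue of Theorem~\ref{thm6}. The first step is the ratio representation. For any $\widetilde{\Lambda}\in\text{EPO}(\widetilde{H})$ and $\cE=\operatorname{Tr}_{\mathcal{H}_r}\circ\operatorname{Tr}_{2\cdots n}\circ\widetilde{\Lambda}$, we write the average success probability and the average fidelity through the Choi operator. This gives
\begin{equation}
\bar p=\operatorname{Tr}(\Gamma_{\widetilde{\Lambda}}\widetilde{C}),\qquad \bar F=\frac{\operatorname{Tr}(\Gamma_{\widetilde{\Lambda}}\widetilde{A})}{\operatorname{Tr}(\Gamma_{\widetilde{\Lambda}}\widetilde{C})}.
\end{equation}
The unprojected operator $(\varphi\otimes\cN^{\otimes n}(s_n))^{T}\otimes I$ can be replaced by $\widetilde{C}$ because $\Gamma_{\widetilde{\Lambda}}=\widetilde{\Pi}\Gamma_{\widetilde{\Lambda}}\widetilde{\Pi}$, by the Choi criterion of Lemma~\ref{lem1}. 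Since $\varphi>0$ and Eq.~\eqref{eq:noise-condition} holds, $\text{Supp}(\widetilde{C})=\text{Supp}(\widetilde{\Pi})$, and hence $\widetilde{C}^{\frac12}\widetilde{C}^{-\frac12}=\widetilde{C}^{-\frac12}\widetilde{C}^{\frac12}=\widetilde{\Pi}$. It follows that
\begin{equation}
\bar F=\operatorname{Tr}\bigl(\widetilde{\sigma}\,\widetilde{C}^{-\frac12}\widetilde{A}\widetilde{C}^{-\frac12}\bigr),\qquad \widetilde{\sigma}:=\frac{\widetilde{C}^{\frac12}\Gamma_{\widetilde{\Lambda}}\widetilde{C}^{\frac12}}{\operatorname{Tr}(\Gamma_{\widetilde{\Lambda}}\widetilde{C})}.
\end{equation}
Theorem~\ref{thm8} gives $F_{\max}=\|\widetilde{C}^{-\frac12}\widetilde{A}\widetilde{C}^{-\frac12}\|_\infty$.

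Next we show that the SDP value is at least $p_{\text{succ}}^{\max}$. Take any $\cE\in\text{UECP}^*$. Its Choi operator satisfies Eqs.~\eqref{pro 2}–\eqref{pro 3} automatically. Because $\bar F$ attains the spectral norm, the state $\widetilde{\sigma}$ must be supported in the top eigenspace, i.e.\ $\widetilde{P}_m\widetilde{\sigma}\widetilde{P}_m=\widetilde{\sigma}$. To turn this into Eq.~\eqref{pro 4}, conjugate by $\widetilde{C}^{-\frac12}$ and use $\widetilde{\Pi}\Gamma_{\widetilde{\Lambda}}\widetilde{\Pi}=\Gamma_{\widetilde{\Lambda}}$. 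Its objective value equals $\bar p$, so this inequality holds.

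For the reverse inequality, take a feasible $\Gamma$ with $\operatorname{Tr}(\Gamma\widetilde{C})>0$; the value-zero case is trivial. We must first show that $\Gamma$ is the Choi operator of some $\widetilde{\Lambda}\in\text{EPO}(\widetilde{H})$. This follows by sandwiching Eq.~\eqref{pro 4} with $\widetilde{\Pi}$ and using $\widetilde{\Pi}\widetilde{C}^{\pm\frac12}=\widetilde{C}^{\pm\frac12}$, which yields $\widetilde{\Pi}\Gamma\widetilde{\Pi}=\Gamma$. Positivity and Eq.~\eqref{pro 3} then make it a CPTN map, and Lemma~\ref{lem1} makes it energy-preserving.

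Eq.~\eqref{pro 4} also implies that the associated $\widetilde{\sigma}$ lies in $\text{Supp}(\widetilde{P}_m)$, so its fidelity equals $F_{\max}$. We must still check that this protocol genuinely belongs to UECP, i.e.\ that $F_{\max}$ strictly exceeds the raw fidelity; this is the step I expect to be delicate. If instead $F_{\max}$ equals the raw fidelity, then UECP is empty by Theorem~\ref{thm7}/Theorem~\ref{thm8}. In that degenerate case the statement is understood vacuously, exactly as in Proposition~\ref{pro3}. Otherwise $\operatorname{Tr}(\Gamma\widetilde{C})=\bar p\le p_{\text{succ}}^{\max}$.

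Combining both inequalities, the SDP value equals $p_{\text{succ}}^{\max}$. The SDP has a compact feasible set, since $\operatorname{Tr}\Gamma\le\dim$ of the input space, and a continuous linear objective, so the maximum is attained. The maximizer $\Gamma_{\widetilde{\Lambda}^*}$ is, by the second paragraph, the Choi operator of an element of $\text{EPO}(\widetilde{H})$ achieving both $F_{\max}$ and $p_{\text{succ}}^{\max}$. Apart from the degenerate-case bookkeeping, the only real care needed is the support identity $\text{Supp}(\widetilde{C})=\text{Supp}(\widetilde{\Pi})$, which is where full rank of $\varphi$ is used.
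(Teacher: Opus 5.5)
Your proposal is correct and follows essentially the paper's route: the proof of Proposition~\ref{pro3} carried over to the battery setting via $\widetilde{C}$, $\widetilde{\Pi}$ and $\text{Supp}(\widetilde{C})=\text{Supp}(\widetilde{\Pi})$ (which is where $\varphi>0$ is used), with both inequalities obtained by the same conjugation arguments. Your explicit handling of the degenerate case, where $\|\widetilde{C}^{-1/2}\widetilde{A}\widetilde{C}^{-1/2}\|_\infty$ does not exceed the raw fidelity so UECP is empty, and your compactness argument for attainment of the SDP optimum are small refinements that the paper leaves implicit.
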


\begin{corollary}\label{Cor:rank-one}
If $\widetilde{P}_m$ has rank one for its setting $(\mathcal{N},\widetilde{H},\varphi,n)$, then the optimal universal energy-constrained purification protocol $\cE^*=\operatorname{Tr}_{\mathcal{H}_r}\circ\operatorname{Tr}_{2...n}\circ\widetilde{\Lambda}^*$ is characterized by $\widetilde{\Lambda}^*$ with Choi operator
\begin{equation}
\Gamma_{\widetilde{\Lambda}^*}=\frac{\widetilde{C}^{-1/2}\widetilde{P}_m\widetilde{C}^{-1/2}}{ \|\operatorname{Tr}_{\rm out}\widetilde{C}^{-1/2}\widetilde{P}_m\widetilde{C}^{-1/2}\|_\infty}.
\end{equation}
And the corresponding maximum average success probability is
\begin{equation}
p_{\text{succ}}^{\text{max}}(\mathcal{N},\widetilde H,\varphi,n)= \|\operatorname{Tr}_{\rm out}\widetilde{C}^{-1/2}\widetilde{P}_m\widetilde{C}^{-1/2}\|_\infty^{-1}.
\end{equation}
\end{corollary}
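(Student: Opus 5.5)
The plan is to transport the proof of Corollary~\ref{cor:rank-one} to the battery-assisted setting, using Proposition~\ref{pro3_plus} in place of Proposition~\ref{pro3}. The one structural input needed is the identity
\[
\widetilde{C}^{\frac{1}{2}}\widetilde{C}^{-\frac{1}{2}}=\widetilde{C}^{-\frac{1}{2}}\widetilde{C}^{\frac{1}{2}}=\widetilde{\Pi}.
\]
It follows exactly as Eq.~\eqref{thm 1-8} did. Because $\varphi>0$ and $\mathcal{N}^{\otimes n}(s_n)>0$, the operator $(\varphi\otimes\mathcal{N}^{\otimes n}(s_n))^{T}\otimes I$ is positive definite. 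Compressing it by the projector $\widetilde{\Pi}$ then gives $\mathrm{Supp}(\widetilde{C})=\mathrm{Supp}(\widetilde{\Pi})$. In the same way, $\mathrm{Supp}(\widetilde{P}_m)\subseteq\mathrm{Supp}(\widetilde{\Pi})$, so $\widetilde{\Pi}\widetilde{P}_m=\widetilde{P}_m\widetilde{\Pi}=\widetilde{P}_m$.

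Next I take any feasible $\Gamma_{\widetilde{\Lambda}}$ of the SDP in Proposition~\ref{pro3_plus}; by that proposition these are exactly the fidelity-maximizing energy-preserving protocols. I multiply constraint \eqref{pro 4} on both sides by $\widetilde{C}^{\frac12}$. Since $\widetilde{C}^{\frac12}\widetilde{C}^{-\frac12}=\widetilde{\Pi}$ and $\widetilde{\Pi}\widetilde{P}_m=\widetilde{P}_m$, this gives
\[
\widetilde{P}_m\,\widetilde{C}^{\frac12}\Gamma_{\widetilde{\Lambda}}\widetilde{C}^{\frac12}\,\widetilde{P}_m=\widetilde{C}^{\frac12}\Gamma_{\widetilde{\Lambda}}\widetilde{C}^{\frac12}.
\]
The right-hand side is positive semidefinite and supported in the range of $\widetilde{P}_m$. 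Since $\widetilde{P}_m=|v\rangle\langle v|$ has rank one, it must equal $q\widetilde{P}_m$ for some $q\ge 0$.

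Conjugating this by $\widetilde{C}^{-\frac12}$ and using $\widetilde{\Pi}\Gamma_{\widetilde{\Lambda}}\widetilde{\Pi}=\Gamma_{\widetilde{\Lambda}}$ recovers $\Gamma_{\widetilde{\Lambda}}=q\,\widetilde{C}^{-\frac12}\widetilde{P}_m\widetilde{C}^{-\frac12}$. The energy-preservation identity $\widetilde{\Pi}\Gamma_{\widetilde{\Lambda}}\widetilde{\Pi}=\Gamma_{\widetilde{\Lambda}}$ holds by Lemma~\ref{lem1}, or directly from \eqref{pro 4} as in part (ii) of the proof of Proposition~\ref{pro3}. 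Conversely, every such $q\ge0$ satisfies \eqref{pro 2} and \eqref{pro 4}, using $\widetilde{\Pi}\widetilde{C}^{-\frac12}=\widetilde{C}^{-\frac12}$. So the feasible set reduces to a one-parameter ray.

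On this ray, constraint \eqref{pro 3} becomes $q\operatorname{Tr}_{\rm out}(\widetilde{C}^{-1/2}\widetilde{P}_m\widetilde{C}^{-1/2})\le I$. That operator is positive semidefinite, so the constraint is equivalent to $q\le \|\operatorname{Tr}_{\rm out}\widetilde{C}^{-1/2}\widetilde{P}_m\widetilde{C}^{-1/2}\|_\infty^{-1}$. The norm is nonzero because $\operatorname{Tr}(\widetilde{C}^{-1}\widetilde{P}_m)>0$. The objective is $\operatorname{Tr}(\Gamma_{\widetilde{\Lambda}}\widetilde{C})=q\operatorname{Tr}(\widetilde{P}_m\widetilde{\Pi})=q$. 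Maximizing $q$ therefore gives both displayed formulas, and Proposition~\ref{pro3_plus} identifies the maximizer as the optimal protocol.

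The only delicate point is the support bookkeeping, which makes $\widetilde{C}^{\pm\frac12}$ pseudo-inverses of each other on $\widetilde{\Pi}$. This is exactly where the full-rank assumption on $\varphi$ is needed; everything else is a verbatim transcription of the proof of Corollary~\ref{cor:rank-one}.
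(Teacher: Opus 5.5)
Your proposal is correct and is essentially the paper's own argument. The paper omits this proof, saying only that it follows the route of Corollary~\ref{cor:rank-one} with Proposition~\ref{pro3_plus} in place of Proposition~\ref{pro3}; you carry out exactly that transcription, and your support bookkeeping is more careful than the paper's.

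One small gloss: $\widetilde{P}_m\le\widetilde{\Pi}$ does not follow ``in the same way'' as $\mathrm{Supp}(\widetilde{C})=\mathrm{Supp}(\widetilde{\Pi})$. It holds because the top eigenvalue of $\widetilde{C}^{-1/2}\widetilde{A}\widetilde{C}^{-1/2}$ is strictly positive: for example, $\operatorname{Tr}(\Gamma_{\mathrm{id}}\widetilde{A})>0$ with $\widetilde{\Pi}\Gamma_{\mathrm{id}}\widetilde{\Pi}=\Gamma_{\mathrm{id}}$. Hence its eigenspace lies in the range of that operator, and so inside $\mathrm{Supp}(\widetilde{C})$.
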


\end{document}